\documentclass[11pt,a4]{article}
\usepackage{asaf}
 
\title{Fast Metric Decompositions in High Dimension%
  \thanks{This paper will appear in the Proceedings of ESA 2026. 
  This version includes an additional note before \Cref{sec:padded-decom}. }
} 

\author{Robert Krauthgamer%
  \thanks{The Harry Weinrebe Professorial Chair of Computer Science.
    This research was supported by the Israel Science Foundation grant \#1336/23.
  Email: \texttt{robert.krauthgamer@weizmann.ac.il}
  } 
  \qquad
  Asaf Petruschka%
  \thanks{
  Email: \texttt{asaf.petruschka@weizmann.ac.il}
  } 
  \qquad 
  Nir Petruschka\thanks{
    Email: \texttt{nir.petruschka@weizmann.ac.il}
  } 
\\  Weizmann Institute of Science
}

\begin{document}

\maketitle

\begin{abstract}
Metric decompositions are a fundamental tool in the design of algorithms involving distances. 
We study fast algorithms for sampling from probabilistic metric decompositions of $n$-point sets in $\ell_\infty$ and $\ell_2$ spaces of high dimension $d$.
For $\ell_\infty$, we design a padded-decomposition algorithm 
that runs in time $\tilde{O}(nd^2)$, which is near-linear in $n$,
and achieves padding parameter $\tilde{O}(\log n)$. 
Our algorithm constructs a new sparse neighborhood cover
that is based on geometric properties of $\ell_\infty$ [Indyk, JCSS'01],
and utilizes recent reductions between covers and decompositions [Conroy and Filtser, STOC'25]. 
For $\ell_2$, we design a separating-decomposition algorithm 
that achieves near optimal separation $\tilde{O}(\sqrt{\log n})$ in almost-linear time $n^{1+o(1)}$. 
Our bounds improve over known algorithms with similar running time by a factor $\Omega(\sqrt{\log n})$, and the techniques have additional applications to spanners and nearest-neighbor search.
\end{abstract}

\section{Introduction}

Metric decompositions are fundamental algorithmic primitives when dealing with data that involves distances.
Informally, a decomposition refers to a probabilistic partition of the points
into low-diameter \emph{clusters},
such that close-by points ``tend'' to be clustered together.
Decompositions are a very useful tool in divide-and-conquer algorithms,
which first handle each low-diameter cluster separately,
and then combine the solutions into a global one.
These decompositions have numerous applications, 
including metric embeddings \cite{Bartal96,Rao99,GKL03,KLMN05,LN05,FL21}, 
approximation algorithms for cut problems \cite{LLR95,GVY96,FHL08},
vertex sparsification \cite{CLLM10,EGKRTT14,MM16, KKN15,Filtser19b}, 
spanners, distance oracles and routing \cite{AGGM06,MN07,ENW16,HIS13,FN22,HMO21}, 
parallel algorithms for linear algebra \cite{BGKMPT14, MPX13}, 
fast algorithms for single-source shortest paths \cite{BNW25, LMR26},
and spectral graph theory \cite{KLPT09, BLR10}. 

Modern applications are scaling rapidly, 
and current datasets are often massive and high-dimensional.
As exceeding linear time significantly is simply prohibitive,
this creates a strong demand for \emph{fast algorithms}. 
Prior work on fast algorithms for metric decompositions has addressed graph metrics,
whereas we study the geometric setting of points in $\RR^d$. 
This line of work aims to match the optimal quality bounds 
but using algorithms whose running time is close to linear (in the input size). 

We consider two classical notions of probabilistic metric decompositions: \emph{separating decompositions} (also known as \emph{Lipschitz decompositions}) and \emph{padded decompositions}, defined as follows. 
Let $(X,d_X)$ be a metric space.
For a partition $\cP$ of $X$,
we call each element of $\cP$ (which is a subset of $X$) a \emph{cluster},
and we denote by $\cP(x)$ the cluster containing point $x\in X$.
The partition $\cP$ is called \emph{$\Delta$-bounded}
if every cluster in it has diameter at most $\Delta > 0$,
i.e., the distance between every two points in the same cluster is at most $\Delta$.

\begin{definition}[Separating Decomposition~\cite{Bartal96}]
  A distribution $\cD$ over $\Delta$-bounded partitions of $X$ is called
  an \emph{$(\alpha, \Delta)$-separating decomposition} if
    \[
    \forall x,y \in X, \qquad
    \Pr_{\cP\sim \cD} [\cP(x) \neq \cP(y)] \leq \alpha \cdot \frac{d_X (x,y)}{\Delta}.
    \]
\end{definition}

\begin{definition}[Padded Decomposition~\cite{Rao99,KL07,Filtser19}]
  A distribution $\cD$ over $\Delta$-bounded partitions of $X$ is called
  a \emph{$(\beta, \delta, \Delta)$-padded decomposition} if 
    \[
    \forall x \in X, \gamma \in [0,\delta], \qquad
    \Pr_{\cP\sim \cD} [B(x,\gamma \Delta) \subseteq \cP(x)] \geq e^{-\beta\gamma} ,
    \]
    where $B(x,r) := \{y \in X \mid d_X (x,y) \leq r\}$ is the closed ball of radius $r\ge0$ around $x$.
\end{definition}
We note that some literature uses variants of these definitions;
for example, it is common to define padded decomposition using only the case $\gamma = \frac{\ln 2}{\beta}$, 
and then $\delta$ is not needed.
Under our definitions above, every $(\beta,\delta,\Delta)$-padded decomposition is also a $(\max\{\beta,\delta^{-1}\}, \Delta)$-separating decomposition.
To see this, consider $x,y\in X$;
  if $d_X(x,y) \geq \delta\Delta$
  then $\Pr[\cP(x) \neq \cP(y)] \leq 1 \leq \frac{d_X(x,y)}{\delta\Delta}$;
  otherwise, $\Pr[\cP(x) \neq \cP(y)] \leq \Pr[B(x, d_X(x,y)) \not\subseteq \cP(x)] \leq 1-e^{-\beta\frac{d_X(x,y)}{\Delta}} \leq \beta\frac{d_X(x,y)}{\Delta}$.  
An earlier connection established in \cite[Theorem 2.2]{LN04pub} 
shows that even a $(\beta,\delta,\Delta)$-padded decomposition  
only for the case $\gamma = \frac{\ln 2}{\beta}$
implies a $(4\beta,2\Delta)$-separating decomposition for the same metric.

We study finite datasets $X$ of size $n=|X|$
that lie in $\RR^d$ equipped with the $\ell_\infty$ or $\ell_2$ norm
(which we denote by $\ell_\infty^d$ and $\ell_2^d$).
We focus on the high-dimensional regime, namely $d \geq \omega(\log n)$,
in which case running time that is exponential in $d$ is prohibitive, 
as it is super-polynomial in $n$.
For ease of presentation, we use the following terminology.
A \emph{padded-decomposition algorithm of quality $q\ge 1$} is an algorithm that,
given a metric space $(X,d_X)$ and a diameter bound $\Delta > 0$,
outputs a partition sampled from a $(\beta,\delta,\Delta)$-padded decomposition
with $\max\{\beta,\delta^{-1}\} \leq q$.
The analogous definition regarding $(\alpha, \Delta)$-separating decomposition
requires $\alpha \leq q$. 
Throughout, the notation $\tilde{O}(f)$ hides a polylogarithmic factor in $f$. 
The term \emph{near-linear} in $f$ refers to $\tO(f)$, whereas \emph{almost-linear} refers to $f^{1+o(1)}$;
they usually refer to running time, in our case mostly with respect to $n$, with polynomial dependence on $d$.

\paragraph{The spanner approach.}
Fast algorithms for metric decompositions have already been devised for \emph{graphs}.
Specifically, for input graphs $G$ with $m$ weighted edges, 
there are $\tilde{O}(m)$-time algorithms for padded (and separating) decomposition 
of optimal quality $O(\log n)$~\cite{MS09,ABCP98,MPX13,CF25}.
Thus, a natural approach is to approximate $(X,d_X)$ 
by a \emph{spanner} $G_X$, which is a sparse graph on the vertex set $X$,
and find a decomposition of this graph $G_X$.
State-of-the-art constructions of geometric spanners~\cite{HIS13,FN22,AZ23,KP25,KPS25,NR26} 
yield surprisingly strong results.
While this approach serves as a good \emph{baseline}, it has some limitations.
First, the intermediate step (constructing a spanner) may be overly restrictive 
and thus preclude achieving a tight quality bound.
Specifically, one factor going into the overall quality of the decomposition 
is the spanner's stretch, which is often $\log^c n$ for some fixed $c>0$. 
Second, by passing to a graph representation, we effectively “forget” the geometric properties of $(X,d_X)$ 
and instead treat it as an arbitrary $n$-point metric,
for which the optimal quality is $O(\log n)$ even for sparse graphs~\cite{Bartal96},
and this factor propagates into the overall quality of the decomposition.
To illustrate this issue, observe that for $X \subseteq \ell_2^d$,
the spanner approach can achieve at best quality $O(\log n)$,
even though a separating decomposition of quality $O(\sqrt{\log n})$ is known to exist~\cite{CCGGP98}. 
We seek to circumvent these limitations by leveraging the dataset's geometry directly.

\paragraph{The shifted-grid approach.}
Another standard approach for fast construction of separating decompositions
in $\ell_p^d$ spaces is to partition the dataset using a randomly shifted grid.
This approach can be implemented in $O(dn)$ time and yields a decomposition of
quality $O(d)$, which is suboptimal for the high-dimensional regime $d=\omega(\log n)$.
This approach can nevertheless be used in $\ell_2^d$, even when $d=\omega(\log n)$,
because the dimension can be reduced to $O(\log n)$ by the JL lemma~\cite{JL84}. 
A technical issue that arises is that the JL map has a small failure probability, 
which might affect the correctness of the procedure (e.g., for very close-by points).
We resolve it by verifying in near-linear-time that the decomposition has succeeded; 
see the proof of \cref{thm:fast-sep-decom} for details.
Perhaps a lesser-known fact is that the shifted-grid approach 
can be used in $\ell_1^d$ even when $d=\omega(\log n)$, 
despite the impossibility of dimension reduction in this space~\cite{BC05, LN04}. 
To construct an $(O(\log n),\Delta)$-separating decomposition, 
set the grid sidelength to $\frac{\Delta}{3\log n}$. 
For every pair of data points $x,y$, 
the probability they are separated by the grid is at most $O(\log n)\frac{\|x-y\|_1}{\Delta}$, 
and the probability they are clustered together is at most $n^{-\frac{3 \|x-y\|_1}{\Delta}}$.
The crux is that even though the $\ell_1$-diameter of a grid cell is larger than $\Delta$, 
the diameter of a cluster is actually determined by the data points it contains;
each pair of data points whose distance exceeds $\Delta$  
is clustered together with probability at most $\frac{1}{n^3}$, 
and thus by a union bound over all pairs, with high probability, no such pair is clustered together. 

We proceed to describe our results, which are also summarized in~\Cref{table:fast-decom}
in comparison to known or baseline bounds.

\begin{table}[]
\centering
\begin{tabular}{lllll}
\midrule
Metric                           & Decomposition & Time & Quality                              & Reference                                                                    \\ \midrule \midrule
general                            & padded  & polynomial & $O(\log n)$                                          & \cite{Bartal96}                                         \\ \midrule

graph                            & padded  & near-linear & $O(\log n)$                                          & \cite{MS09}                                         \\ \midrule
\multirow{4}{*}{$\ell_\infty^d$} & padded   & polynomial & $O(\log n)$                                          & \cite{Bartal96}                                         \\
                                 & padded & almost-linear & $O( \log n \cdot \log^4 d)$          & via spanners \cite{KPS25} + \cite{MS09}         \\
                                 & padded & near-linear & $O(\log^{3/2} n \cdot \log^4 d)$          & via spanners \cite{KPS25} + \cite{MS09}         \\
                                 & padded & near-linear      & $O(\log n \cdot \log \log d)$            & \Cref{thm:fast-padded-decom}                              \\
 \midrule
\multirow{3}{*}{$\ell_2^d$}      & separating & polynomial  & $O(\sqrt{\log n})$                   & \cite{CCGGP98}         \\
                                 & separating & almost-linear           & $\tO(\sqrt{\log n})$           & \Cref{thm:fast-sep-decom}                                 \\ 
                                 & separating & near-linear           & $O(\log n)$                               & via randomly shifted grid                                             \\
                                 \midrule
\end{tabular}
\caption{Comparison of metric decompositions for $n$-point metrics,
  focusing on the quality achieved in time that is polynomial vs.\ almost-linear vs.\ near-linear in $n$ 
  (ignoring the dependence on $d$ which is polynomial).
  }
\label{table:fast-decom}
\end{table}

\paragraph{$\ell_\infty$ metrics.}
Our main result is a near-linear-time padded-decomposition algorithm for $\ell_\infty$ 
with near-optimal quality $\tO(\log n)$. 
We prove the following in~\Cref{sec:padded-decom}. 

\begin{theorem}\label{thm:fast-padded-decom}
    There is a padded-decomposition algorithm for $n$-point metrics in $\ell_\infty^d$, 
    that achieves quality $O(\log n \cdot \log \log d)$ in running time $\tilde{O}(nd^2)$.
\end{theorem}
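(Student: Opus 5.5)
The plan is to follow the route described in the abstract. First build a fast sparse neighborhood cover for $X\subseteq\ell_\infty^d$ using Indyk's geometric properties of $\ell_\infty$. Then convert this cover into a padded decomposition with the cover-to-decomposition reduction of Conroy and Filtser [CF25].

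Concretely, I aim for a $(\rho, s, \Delta)$-cover with padding $\rho = O(\log\log d)$ and overlap $s = O(\log n)$, or whatever tradeoff makes $\rho\cdot s$ come out as $O(\log n\cdot\log\log d)$. Such a cover is a collection of clusters of diameter at most $\Delta$ such that:
\begin{itemize}
\item every ball $B(x,\Delta/\rho)$ is contained in some cluster;
\item every point lies in at most $s$ clusters.
\end{itemize}
The CF25 reduction turns such a cover, after $\tilde O(\cdot)$ overhead in its total size, into a $(\beta,\delta,O(\Delta))$-padded decomposition with $\max\{\beta,\delta^{-1}\} = O(\rho\cdot\log s)$ or $O(\rho \cdot s)$, depending on the exact form. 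Rescaling $\Delta$ by a constant then gives the stated quality.

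The cover itself comes from Indyk's recursive $\ell_\infty$ decomposition (JCSS'01). Pick a coordinate $i$ and a threshold $t$ such that the slab $\{x : |x_i - t|\le r\}$ is sparse compared with the two sides. Split $X$ into the left side $\{x_i \le t + r\}$ and the right side $\{x_i > t - r\}$, so that points near the cut go to both sides. Recurse until a part has $\ell_\infty$-diameter at most $\Delta$, which is checked coordinatewise.

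Indyk's balancing argument chooses the slab width by a geometric-growth condition. Set $|X_{\le t+r}|$ against $|X_{\le t}|^{1+1/\rho'}$-type inequalities, for example at scales $r_j = j\cdot\Delta/\rho$. This gives that every ball of radius $\Delta/\rho$ stays entirely on at least one side of each cut. It also keeps the total size (counted with duplication) at $n^{1+O(1/\rho')}$, or per-point multiplicity $O(\log n)$, when $\rho = O(\log\log d)$ is chosen with $\rho' \approx \log n$. The factor $\log\log d$ arises exactly as in Indyk's $O(\log\log d)$-approximate NN in $\ell_\infty$.

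For the running time, each recursion level can sort every coordinate once. That costs $O(nd\log n)$, and checking the $d$ coordinates for a good cut costs $O(d)$ scans. Over $O(d)$ levels per branch (each cut reduces the spread in one coordinate), and with total duplicated size $\tilde O(n)$, this gives $\tilde O(nd^2)$.

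The main obstacle is keeping the duplication near-linear rather than $n^{1+\epsilon}$ while keeping padding $O(\log\log d)$. Indyk's analysis naturally gives a $\rho$-versus-$n^{1/\rho'}$ tradeoff. I would first try setting the growth parameter to $1+1/\log n$, which makes the multiplicity $O(\log n)$ per point with padding $O(\log\log d)$. I would then verify that the CF25 reduction costs only an extra $O(\log s)$ or $O(1)$ factor when $s = O(\log n)$.

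If the recursion depth is larger than $O(d)$ (cuts are not guaranteed to shrink coordinates), I would bound it instead by $O(\log n)$ via balance of the sides, and charge the diameter test separately.
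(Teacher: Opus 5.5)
There is a genuine gap. Your high-level route, a fast sparse neighborhood cover for $\ell_\infty^d$ built from Indyk's lemma followed by the reduction of \cite{CF25}, is the paper's route. The quantitative core, however, is wrong.

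\textbf{The target cover cannot exist.} The reduction turns a $(\beta,s,\Delta)$-cover into an $(O(\beta\log s),\Omega(1/\beta),\Delta)$-padded decomposition, not $O(\beta s)$. Under that correct bound, a cover with padding $O(\log\log d)$ and per-point overlap $O(\log n)$ is impossible in general. By Fr\'echet's embedding every $n$-point metric lives isometrically in $\ell_\infty^n$. Such a cover would then give padded decompositions of quality $O((\log\log n)^2)$ for constant-degree expanders, contradicting the $\Omega(\log n)$ lower bound of \cite{Bartal96}.

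The false step is the claim that growth parameter $1+1/\log n$ gives overlap $O(\log n)$ together with padding $O(\log\log d)$. In Indyk's analysis, growth $1+\rho$ gives padding $O(\log_{1+\rho}\log d)=O(\rho^{-1}\log\log d)$ at blow-up $n^{\rho}$. So $\rho=1/\log n$ gives padding $O(\log n\log\log d)$, and the $\log s$ factor of \cite{CF25} then costs an extra $\log\log n$.

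The paper's cover has $\beta=O(\rho^{-1}\log\log d)$ and $s=O(n^{\rho}\rho^{-1}\log n)$, and it chooses $\rho=\log\log n/\log n$. Then $n^\rho=\log n$, so $s$ is polylogarithmic and $\log s=O(\log\log n)$ exactly cancels the $1/\log\log n$ in $\beta$. This yields quality $O(\beta\log s)=O(\log n\log\log d)$ and time $O(n^{1+\rho}\rho^{-1}d^2\log n)=\tilde O(nd^2)$.

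\textbf{The cover construction does not give what \cite{CF25} needs.} Sending middle-slab points to both sides, as in Indyk's data structure, only bounds the total size $\sum_C|C|$, i.e., average sparsity. A single point can be duplicated at many levels, while the reduction needs every point to lie in at most $s$ clusters. Your recursion also drops the second outcome of Indyk's lemma: a dense box of radius $r=O(\rho^{-1}\log\log d)$ containing half the terminals. That outcome is exactly where $\log\log d$ enters, since $B_\infty(x^*,r+1)$ has width $O(r)$ and pads the terminals within distance $r$ of $x^*$ by $1$.

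The paper instead uses an Awerbuch--Peleg-style terminal reduction:
\begin{itemize}
\item Middle-slab terminals are dropped rather than duplicated, and deferred to later rounds; dense boxes become clusters.
\item Recursing on disjoint half-domains splits each round's rectangles into $\log n+1$ families of pairwise disjoint rectangles.
\item The condition $\alpha^{1-\rho}+\gamma^{1-\rho}\ge 1$, plus subadditivity of $z^{1-\rho}$ in the dense-box case, guarantees each round covers at least $|T|^{1-\rho}$ terminals, so $O(n^\rho/\rho)$ rounds suffice.
\item Each round has depth $O(d\log n)$, because each side keeps a $1/(4d)$ fraction of the terminals.
\end{itemize}

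\textbf{Running \cite{CF25} fast is not automatic.} Its boundary distances $\min_{y\in X\setminus C}d_X(x,y)$ are expensive to compute. The paper replaces them by the $\ell_\infty$-distance from $x$ to $\mathbb{R}^d\setminus R$, for the rectangle $R$ inducing $C$. This quantity is computable in $O(d)$ time, is $1$-Lipschitz, is zero outside $R$, and is at least $\Delta/\beta$ for the covering rectangle, which is all the analysis of \cite{CF25} uses.
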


Note that by a classical result of Matousek~\cite{Matousek96}, 
every $n$-point metric embeds into $\ell_\infty^{n^{\epsilon}}$ with distortion $O(1/\epsilon)$,%
\footnote{A metric space $(X, d_X)$ embeds into a metric space $(Y, d_Y)$ with \emph{distortion} $k \geq 1$ 
if there exist a function $f: X \to Y$ and $s > 0$ 
such that for all $x, y \in X$, $\frac{s}{k} \cdot d_X(x, y) \leq d_Y(f(x), f(y)) \leq s \cdot d_X(x, y)$.
} 
and this implies, by a result of \cite{Bartal96},
that some $n$-point metrics in $\ell_\infty^{n^{\epsilon}}$ require quality $\Omega(\epsilon \log n)$. 
For fixed $\epsilon>0$, our \Cref{thm:fast-padded-decom} nearly matches this lower bound 
(with running time $\tilde{O}(n^{1+2\epsilon})$).
For comparison with the baseline, 
current $o(n^2)$-time spanner constructions \cite{KPS25}%
\footnote{We note that~\cite{NR26} show the existence of spanners in $\ell_\infty$ with size $\tilde{O}(n)$ and stretch $O(\sqrt{\log n} \log^2 d)$, but their running time is not analyzed.} 
have stretch roughly $O(\log^4 d)=O(\log^4 n)$,
and thus a baseline padded-decomposition algorithm only achieves a far worse quality $O(\log^5 n)$.

The algorithm of~\Cref{thm:fast-padded-decom} goes through the closely-related notion of 
\emph{sparse neighborhood cover}~\cite{AP90, ABCP98} (for formal definition see~\Cref{def:nbr-cover}), 
which was recently shown to yield a padded decomposition~\cite{CF25}, and is also of independent interest.
Indeed, we provide in \Cref{sec:l_infty-spanners} an application of our sparse neighborhood cover
to fast spanner construction (with near-optimal stretch-size tradeoff)
in $\ell_\infty^d$ for $d=n^\epsilon$. 

\paragraph{$\ell_2$ metrics.}
Our result for $\ell_2$ is an almost-linear-time algorithm that achieves near-optimal quality, namely $\tO(\sqrt{\log n})$. When the algorithm’s parameters are tuned to run in near-linear time, the algorithm essentially amounts to applying the folklore shifted-grid approach.

\begin{theorem}\label{thm:fast-sep-decom}
    Fix $0 < \rho < 1$.
    There is a separating decomposition algorithm for $n$-point metrics in $\ell_2^d$ with quality $O(\sqrt{\rho^{-1} \log n \log \log n})$ and running time $\tilde{O}(n^{1+\rho} + d^2n)$.
    The algorithm is Las Vegas, meaning that with high probability it reports a partition sampled from a separating decomposition, and with the remaining probability it reports ``Fail''.
\end{theorem}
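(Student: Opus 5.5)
Our approach is to reduce the dimension with a JL map, apply a ball-carving decomposition in the low-dimensional space that needs only approximate near-neighbor queries, and finally certify the output so that JL failures cannot produce an invalid partition.

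\textbf{Step 1: dimension reduction.} Apply a Johnson--Lindenstrauss map $\Pi:\RR^d\to\RR^k$ with $k=O(\log n)$ (preceded, if convenient, by a fast JL transform), computable for all points in $\tilde O(d^2 n)$ time. With high probability it distorts every pairwise distance by at most a factor $2$. Separately, for every fixed pair $x,y$ we have $\mathbb{E}\|\Pi x-\Pi y\|_2\le \|x-y\|_2$, since the projection is unbiased in squared norm and Jensen's inequality applies. This pointwise-in-expectation bound is what the separation analysis needs, so the separation guarantee holds for \emph{every} pair, and not only on the good event.

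\textbf{Step 2: separating decomposition in $\ell_2^k$.} We adapt the \cite{CCGGP98}-style ball carving, i.e.\ the random-centers, random-radius (CKR-type) scheme, in which the relevant ball-boundary probability scales like $\sqrt{k}$ rather than $k$. The scheme draws random points $c_1,c_2,\dots$ from a fine grid (or from a Poisson process) and assigns each data point to the first center within distance $\Delta/4$. The separation probability for a pair $x,y$ is $O(\sqrt{k})\cdot\|x-y\|/\Delta$, because a random ball of radius $r$ in $\RR^k$ cuts a segment of length $t$ with probability $O(\sqrt{k}\,t/r)$.

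To make this fast, we replace ``the first center within distance $r$'' by ``the first center within distance $cr$,'' found with an LSH-based approximate near-neighbor structure. We build this structure over the data points and query it in center order, deleting points as they are assigned. Using $c=1+\epsilon$ with $\epsilon\approx\rho/\log\log n$ gives query time $n^{\rho}$, i.e.\ total time $n^{1+\rho}$. Alternatively, one can use the ball-partitioning scheme of Andoni--Indyk / Charikar et al., in which the space is partitioned by randomly shifted grids of balls in $t$-dimensional blocks, with $k/t$ blocks and $t\approx\rho^{-1}\log\log n$ chosen so that enumerating the grid-balls costs $2^{O(t\log t)}=n^{\rho}$. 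Each block then contributes separation $O(\sqrt t)\cdot(\text{projected length})$. Summing over the blocks via Cauchy--Schwarz, with each block's projected length concentrated around $\sqrt{t/k}\,\|x-y\|$, gives overall separation $O(\sqrt{k\cdot t})\,\|x-y\|/\Delta$. With $k=\log n$ and $t=\rho^{-1}\log\log n$ this is $O(\sqrt{\rho^{-1}\log n\log\log n})$, matching the claimed quality, so we commit to this block-ball-grid route. Balls of radius $\Theta(\Delta)$ in each block, intersected across blocks, give cells of diameter at most $\Delta/2$ in $\RR^k$, and hence at most $\Delta$ in the original space on the JL good event.

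\textbf{Step 3: Las Vegas verification.} We verify that every cluster has original diameter at most $\Delta$. To do this in near-linear time, pick any point $z$ in each cluster and check that $\|x-z\|_2\le\Delta/2$ for every member $x$, computed in the original $\RR^d$ in $O(dn)$ total time. We use the tighter in-algorithm diameter bound ($\Delta/4$ radius in projected space, which implies $\Delta/2$ originally on the good event) so that this sufficient check passes with high probability. If the check fails we output ``Fail.'' Conditioning on success perturbs the separation probabilities only by a factor $1+o(1)$, since success has probability $1-1/\mathrm{poly}(n)$; the resulting small additive error is harmless for pairs with $\|x-y\|\ge\Delta/\mathrm{poly}(n)$, and closer pairs need the per-pair expectation bound from Step 1.

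\textbf{Main obstacle.} We expect Step 2 to be the hardest: making the per-block separation bound and the enumeration cost $n^{\rho}$ per point fit simultaneously. The delicate part is arguing that the union over blocks costs only $\sqrt{k t}$ rather than $k/t\cdot\sqrt t$, which requires the Cauchy--Schwarz argument over the block projections described above.
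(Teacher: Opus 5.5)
\textbf{Step 2 as written does not work, though the fix recovers the paper's construction.} The idea is the paper's: split the $k=O(\log n)$ coordinates into blocks, run the Andoni--Indyk decomposition (\Cref{thm:exp-sep-decom}) on each block, take the common refinement, and bound separation by a union bound plus Cauchy--Schwarz. But you have swapped the roles of the number of blocks and the block dimension. Take $m$ blocks of dimension $b=k/m$. A refinement of per-block cells of diameter $D_b$ has diameter up to $\sqrt{m}\,D_b$. So the per-block diameter must be $\Theta(\Delta/\sqrt{m})$, not the $\Theta(\Delta)$ you state. With that scaling,
\[
\sum_{j=1}^{m} O(\sqrt{b})\,\frac{\|v_j\|_2}{\Delta/\sqrt{m}}
\;\le\; O(\sqrt{bm})\cdot\sqrt{m}\,\frac{\|v\|_2}{\Delta}
\;=\; O(\sqrt{km})\,\frac{\|v\|_2}{\Delta}.
\]
Under your reading (block dimension $t=\rho^{-1}\log\log n$, with $k/t$ blocks) this is $O(k/\sqrt{t})$, which is far from $\sqrt{\log n}$. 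Also, $2^{O(t\log t)}=n^{\rho}$ forces $t\log t=\Theta(\rho\log n)$, not $t=\rho^{-1}\log\log n$. The consistent choice is $m=\Theta(\rho^{-1}\log\log n)$ blocks, each of dimension $b=\Theta(\rho\log n/\log\log n)$. This gives time $n^{1+O(\rho)}$ and quality $O(\sqrt{\rho^{-1}\log n\log\log n})$, which is exactly the paper's $\rho^{-1}$ blocks of size $\rho d$ followed by rescaling $\rho$. The concentration of block lengths plays no role, since Cauchy--Schwarz works for an arbitrary coordinate partition. You should drop the LSH route entirely: LSH with $c=1+\epsilon$ has query time roughly $n^{1/c^2}=n^{1-\Theta(\epsilon)}$, not $n^{\rho}$.

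\textbf{Close pairs: a genuinely different route, and it is correct and simpler.} The paper bounds the separation probability by $\Pr[\cP(x)\neq\cP(y)\mid J]+\Pr[\overline{J}]$, where $J$ is the JL success event. This leaves an additive $n^{-6}$ term. The paper therefore adds a close-pairs verification, using Chan's hashing at cost $\tilde O(d^2n)$, which rejects any output that separates pairs within $\Delta/n^6$.

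You instead use that, for \emph{every} fixed $\Pi$, the refinement separates $x,y$ with probability at most $\tilde\alpha\|\Pi x-\Pi y\|_2/\tilde\Delta$. This holds conditioned on all Andoni--Indyk runs succeeding; you should state explicitly that a failed run also yields ``Fail''. The bound is linear in $\|\Pi(x-y)\|_2$, so averaging over a Gaussian $\Pi$ with $\mathbb{E}\|\Pi v\|_2\le\|v\|_2$ gives $\Pr[\text{$x,y$ separated and output reported}]\le\tilde\alpha\|x-y\|_2/\tilde\Delta$ for all pairs. Dividing by $\Pr[\text{output reported}]\ge 1/2$ finishes the argument.

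Consequences: your far/close case split in Step 3 is unnecessary, only the diameter check is needed, and the running time improves to $\tilde O(n^{1+\rho}+dn)$. What the paper's route buys in exchange is that it needs only a high-probability distortion guarantee from the JL map, not non-expansion in expectation.
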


An optimal quality bound $O(\sqrt{\log n})$ was established in~\cite{CCGGP98} 
without an explicit analysis of its algorithmic efficiency.
A naive algorithmic implementation of their proof runs in time $\tilde{O}(n^3 + dn)$,
which can be improved to $\tilde{O}(n^2 + dn)$ using standard methods, as explained in~\cite[Section 2]{Stein_Thesis}.
However, their more sophisticated improvement of the running time to $\tilde{O}(n^{1.51} + dn)$
using Locality Sensitive Hashing (LSH), claimed in~\cite[Section 3]{Stein_Thesis}, 
is unfortunately flawed.

We remark that \Cref{thm:fast-sep-decom} implies also a fast separating-decomposition algorithm 
for $\ell_p$ metrics, $p \in (2,\infty]$,
by just applying known results of~\cite{KP25, KPS25, NR26}.
These results employ a sequence of mappings that essentially reduce 
the problem of constructing a separating decomposition in $\ell_p$ to the same problem in $\ell_2$.
We can simply replace the algorithm of~\cite{CCGGP98} for $\ell_2$ with ours,
and thereby speed up the overall running time.

The proof of \cref{thm:fast-sep-decom} is a surprisingly simple reduction 
to the low-dimensional setting for $\ell_2^{d'}$,
for which we use a separating-decomposition algorithm of Andoni and Indyk~\cite{AI06}. 
In a nutshell, the reduction divides the coordinates 
into blocks of a carefully chosen length and works on each block separately.
We additionally show in \Cref{sec:det-nns} that the same simple idea 
can be applied also to the \emph{nearest-neighbor search} problem (NNS),
yielding new deterministic and space-efficient data structures.

\paragraph{Additional note.}
After completing the current version of the paper (to appear in ESA 2026), 
we obtained, with assistance from \texttt{GPT-5.6-sol Ultra}, 
an improvement to \Cref{thm:fast-sep-decom},
namely, an algorithm that computes a separating decomposition in $\ell_2$
with near-optimal quality $\tO(\sqrt{\log n})$ in near-linear time. 
In comparison, \Cref{thm:fast-sep-decom} 
can achieve such $\tO(\sqrt{\log n})$ quality but in almost-linear time. 
We will include this improved result in a forthcoming revision of the paper.

\section{Padded Decomposition in $\ell_\infty^d$}\label{sec:padded-decom}

This section proves \Cref{thm:fast-padded-decom} 
by presenting our padded-decompositions algorithm for a dataset $X \subseteq \ell_\infty^d$. 
We use $d_X$ to denote the induced metric on a subset $X \subset \ell_\infty^d$. 
The closed ball of radius $r\ge0$ around $x$ in $X$ is denoted 
$B_X(x,r) := \{y \in X : \|x-y\|_\infty \leq r\}$. 
We also abbreviate $B_{\ell_\infty^d}(x,r)$ as $B_{\infty}(x,r)$. 

\subsection{Sparse Neighborhood Covers}

Our algorithm goes through a \emph{sparse neighborhood cover}~\cite{AP90}, 
which is a deterministic counterpart of padded decomposition, defined as follows. 

\begin{definition}[Neighborhood Cover]\label{def:nbr-cover}
    A collection $\cC$ of subsets of $X$ (called \emph{clusters}) is a \emph{$(\beta,s,\Delta)$-neighborhood cover} if the following hold:
    \begin{itemize}
        \item (Cluster Diameter) Each cluster $C \in \cC$ has diameter at most $\Delta$.
        \item (Covering) For each $x \in X$ there exists a cluster $C \in \cC$ such that $B_X (x,\frac{\Delta}{\beta}) \subseteq C$.
        \item (Sparsity) Each $x \in X$ is contained in at most $s$ clusters from $\cC$.
    \end{itemize}
\end{definition}

A common way to construct a neighborhood cover of sparsity $s$ is to take $s$ independent partitions drawn from a padded decomposition.
Recently, Conroy and Filtser~\cite{CF25} showed that the other direction also holds: Given a $(\beta, s, \Delta)$-neighborhood cover, one can generate a $(O(\beta \log s), \Omega(\frac{1}{\beta}), \Delta)$-padded decomposition.
Using their reduction, our main technical contribution is a fast algorithm for sparse neighborhood covers in $\ell_\infty^d$, which, in fact, covers the dataset with ``continuous'' axis-aligned rectangles in $\mathbb{R}^d$.

\begin{definition}
    An axis-aligned rectangle in $\mathbb{R}^d$ is a set of the from $R = I_1 \times \cdots \times I_d$, where each $I_i$ is an interval in $\mathbb{R}$.
    (We allow intervals of all forms $[a,b], (a,b],[a,b),(a,b)$.)
    The \emph{width} of $R$ is the length of the longest interval among $I_1,\dots,I_d$. 
    Note that the intersection of two axis-aligned rectangles is another axis-aligned rectangle (or empty), and every ball $B_\infty (x,r)$ in $\ell_\infty^d$ is an axis-aligned rectangle of width $2r$.
\end{definition}

\begin{theorem}\label{thm:sparse-cover-ell-infty}
    There is an algorithm that, given dataset $X \subseteq \ell_\infty^d$ of size $|X| = n$, diameter bound $\Delta > 0$ and parameter $0 < \rho < 1$, outputs a collection $\cR$ of axis-aligned rectangles with associated cluster collection $\cC(\cR) = \{ C(R) := X \cap R\}_{R \in \cR}$ such that the following properties hold true.
    For $s = O(n^{\rho} \cdot \rho^{-1} \log n)$ and $\beta = O(\rho^{-1} \log \log d)$:
    \begin{itemize}
        \item (Diameter) Each rectangle $R \in \cR$ has width at most $\Delta$.
        \item (Covering) For each $x \in X$, there is covering rectangle $R \in \cR$ 
        such that $B_\infty (x, \frac{\Delta}{\beta}) \subseteq R$.
        \item (Sparsity) Each $x \in X$ is contained in at most $s$ rectangles from $\cR$.
    \end{itemize}
    In particular, the clusters collection $\cC(\cR)$ forms a $(\beta,s,\Delta)$-neighborhood cover of $X$.
    The algorithm is deterministic and runs in time $O(n^{1+\rho} \cdot \rho^{-1} d^2 \log n)$.
\end{theorem}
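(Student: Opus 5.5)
We plan to adapt Indyk's recursive ``balanced cut or dense ball'' scheme for $\ell_\infty$ (the one behind his $O(\log_{1+\rho}\log d)$-approximate NNS) into a recursive construction of rectangles. The algorithm maintains a current rectangle $R$ (initially $\mathbb{R}^d$) and the point set $P = X\cap R$. It either splits $P$ by a single coordinate threshold, or, failing that, finds a dense ball and outputs a cluster.

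Concretely, given $P$ with $|P|=m$, we look for a coordinate $i$ and a threshold $t$ such that both sides are balanced in the following sense. Let
\[
P_{\le}=\{p\in P: p_i\le t-\tfrac{\Delta}{\beta}\},\qquad P_{\ge}=\{p\in P: p_i\ge t+\tfrac{\Delta}{\beta}\}.
\]
The requirement is $|P_{\le}|,|P_{\ge}|\ge m/(\text{something})$ and also $|\{p: |p_i-t|<\Delta/\beta\}|\le |P_{\le}|^{\rho}$-type slack. Following Indyk, we scan the sorted order in each coordinate, looking for a ``gap window'' whose middle slab of width $2\Delta/\beta$ is sparse relative to the smaller side raised to power $1+\rho$.

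If such a cut is found, we recurse on the two rectangles $R\cap\{x_i\le t+\Delta/\beta\}$ and $R\cap\{x_i\ge t-\Delta/\beta\}$, which overlap in a slab of width $2\Delta/\beta$. Every ball $B_\infty(x,\Delta/\beta)$ that lies inside $R$ lies entirely inside one of the children, namely the one on $x$'s side of $t$. This preserves the covering invariant. Points in the slab are duplicated, and the imbalance condition ensures that total duplication, and hence the sparsity, is bounded.

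If no such cut exists in any coordinate, Indyk's argument (the $\log\log d$ iteration of the growth condition $k\mapsto k^{1+\rho}$) shows that the points are concentrated. In that case, for each coordinate $i$, the median interval is such that one can grow an interval of length $\le\Delta$ in each coordinate containing most points, with an inner margin of $\Delta/\beta$. We then output the rectangle $R'$ formed as the product of these intervals intersected with $R$, and record that $R'$ covers (with margin) all points whose $\Delta/\beta$-ball lies inside it. We recurse on $R$ minus an inner-shrunk version of $R'$. This set is not a rectangle, so instead we recurse on the $2d$ ``outer'' pieces $R\cap\{x_i\le a_i+\Delta/\beta\}$ and $R\cap\{x_i\ge b_i-\Delta/\beta\}$. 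Each such piece contains few points by the failure of the cut, which makes the recursion shrink geometrically. The factor $d$ in these pieces, together with the $O(d)$ per coordinate scan, explains the $d^2$ in the running time.

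The verification steps are as follows. Widths stay $\le\Delta$ because clusters are output only in the dense case, where every interval has length $\le\Delta$. Covering follows by induction on the invariant ``every $x\in P$ whose ball $B_\infty(x,\Delta/\beta)$ is inside $R$ is eventually covered'', noting that every ball is inside the root. Sparsity is the delicate accounting: using a potential of the form $\sum m^{1+\rho}$, each cut with slab size $\le \min^{\rho}$ does not increase $\sum_{\text{pieces}} m^{1+\rho}$ beyond $m^{1+\rho}$. This gives total size $O(n^{1+\rho})$, and per-point multiplicity $O(n^\rho\rho^{-1}\log n)$ since recursion depth is $O(\rho^{-1}\log n)$. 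The running time then follows from $O(d\cdot m)$ work per node after presorting, which gives $O(n^{1+\rho}\rho^{-1}d^2\log n)$.

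The main obstacle is the dense case. We must show that when no sparse cut exists in any coordinate, we can choose $\beta=O(\rho^{-1}\log\log d)$ and intervals of total width $\le\Delta$ around a heavy core. This requires iterating the density growth at most $\log_{1+\rho}\log d$ times per coordinate before exceeding $m$, which is exactly Indyk's calculation. We must also make sure that the leftover pieces are small enough for the sparsity potential. We would first try to mirror Indyk's proof directly, with $\Delta/\beta$ steps replacing his radius increments.
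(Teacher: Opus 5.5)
\textbf{There is a genuine gap, in the sparsity step.} Your recursion passes slab points to both children, which is Indyk's data structure. The $\sum m^{1+\rho}$ potential then bounds only the \emph{total} size $\sum_{R}|X\cap R|$, so it gives an average-sparsity cover. The theorem needs a per-point bound, and ``per-point multiplicity $O(n^\rho\rho^{-1}\log n)$ since the depth is $O(\rho^{-1}\log n)$'' does not follow. A point duplicated at each of $L$ levels can reach $2^L$ leaves, and $2^{O(\rho^{-1}\log n)}=n^{O(1/\rho)}$. (Also, with Indyk's balance $\alpha,\gamma\ge 1/(4d)$ the depth is $O(d\log n)$.)

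Here is a concrete counterexample. Take $n=2^k$ points $p_b$, $b\in\{0,1\}^k$, with $(p_b)_j=\pm L$ according to $b_j$ for $j\le k$ and $0$ elsewhere, plus the origin, with $L\gg\Delta$. At depth $j$, the cut along coordinate $j$ at $t=0$ is legal: the sides are balanced and the slab is just the origin. The origin's $\Delta/\beta$-ball lies in both children every time. So the origin must be covered separately in each of the $n$ leaves and lies in $n$ output rectangles, while the total size stays $O(n)$. Your dense step has the same defect in milder form: one point can lie in up to $d$ of your $2d$ overlapping outer pieces. Separately, those pieces need margin $2\Delta/\beta$, not $\Delta/\beta$, to contain the balls of points just outside the shrunk $R'$.

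\textbf{The missing idea is to not pass slab points to both children, and to pay for this with repeated rounds.} The paper does this via a terminal reduction; normalize so that $\Delta/\beta=1$.
\begin{itemize}
\item In one round, terminals in the slab $t-1\le x_i\le t+1$ are \emph{given up}, i.e.\ left uncovered for a later round.
\item The children get the disjoint domains $\{x_i<t\}$ and $\{x_i>t\}$, so the cut recursion yields pairwise-disjoint rectangles.
\item In the dense case one outputs $B_\infty(x^*,r+1)\cap D$ and recurses on the \emph{same} domain, removing only the covered terminals. Since the terminal set at least halves, the rectangles split into at most $\log|T|+1$ families of pairwise-disjoint rectangles.
\item Indyk's condition $\alpha^{1-\rho}+\gamma^{1-\rho}\ge 1$, together with subadditivity of $z\mapsto z^{1-\rho}$, shows each round covers at least $|T|^{1-\rho}$ terminals. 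Hence $O(n^\rho/\rho)$ rounds suffice, giving $s=O(n^\rho\rho^{-1}\log n)$.
\end{itemize}
What you sketched is essentially the average-sparsity variant the paper mentions for spanners. It does not suffice here, because the Conroy--Filtser reduction needs the maximum multiplicity $s$, both for the $O(\beta\log s)$ padding and for the $O(nsd)$ running time.
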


We now focus on the proof of~\Cref{thm:sparse-cover-ell-infty}; later, in~\Cref{sec:from-snc-to-padded}, we explain how it is used through the reduction of~\cite{CF25} to get the padded decomposition algorithm of~\Cref{thm:fast-padded-decom}.

\paragraph{Terminal reduction.}
The algorithm is based on a \emph{terminal reduction} procedure.
Given a set of terminals $T$ within $X$, this procedure constructs a sparse collection of rectangles which covers a large fraction of the given terminals, leaving the uncovered ones as the new, smaller terminal set.
The formal guarantees are stated in the following lemma:

\begin{lemma}[Terminal Reduction]\label{lem:terminal-reduction}
    There is an algorithm $\Cover(D,X,T,\rho)$, whose input consists of axis-aligned rectangular domain $D = I_1 \times \cdots \times I_d \subseteq \mathbb{R}^d$, dataset $X$ of $n$ points from $D$, terminal set $T \subseteq X$ of points with $\ell_\infty$-distance more than $1$ from the boundary of $D$, diameter bound $\Delta > 0$, and parameter $0 < \rho < 1$.

    It outputs a new terminal set $T^{\out} \subseteq T$ such that $|T^{\out}| \leq |T| - |T|^{1-\rho}$, along with a collection $\cR$ of axis-aligned rectangles with associated clusters $\cC(\cR) = \{X \cap R\}_{R \in \cR}$, such that each rectangle in $\cR$ is contained within the domain $D$, and the following hold:
    \begin{itemize}
        \item (Diameter) Each rectangle $R \in \cR$ has width at most $\Delta$.
        \item (Covering) For each $x \in T \setminus T^{\out}$, there is some $R \in \cR$ such that
        $B_\infty (x ,\tfrac{\Delta}{\beta}) \subseteq R$, where $\beta = O(\frac{\log \log d}{\rho})$.
        \item (Sparsity) The collection $\cR$ can be partitioned into at most $\log |T| + 1$ sub-collections, 
        where each sub-collection consists of pairwise-disjoint rectangles.
    \end{itemize}
    The algorithm is deterministic and runs in $O(n \cdot d^2 \log |T|)$ time.
\end{lemma}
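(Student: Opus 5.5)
Throughout, normalize so that $\Delta/\beta = 1$, where $\beta = C\rho^{-1}\log\log d$ for a large constant $C$. This explains why the input terminals are assumed to be at $\ell_\infty$-distance more than $1$ from $\partial D$: a unit ball around a terminal never leaves $D$. I would implement $\Cover(D,X,T,\rho)$ as a deterministic divide-and-conquer in the spirit of Indyk's $\ell_\infty$ nearest-neighbor structure. A node of the recursion holds a rectangular domain $D'\subseteq D$ and a terminal set $T'\subseteq T$. At the node we either find a \emph{cut} or output a \emph{final rectangle}.
\begin{itemize}
\item A cut is a coordinate $i$ and a threshold $a$. The domain splits into $D'\cap\{x_i< a\}$ and $D'\cap\{x_i\ge a\}$. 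Terminals with $|x_i-a|\le 1$ are discarded (they go to $T^{\out}$). The remaining terminals on each side are still at distance $>1$ from the new boundary, so the invariant is preserved.
\item A final rectangle $R\subseteq D'$ has width at most $\Delta$ and contains $B_\infty(x,1)$ for many terminals $x\in T'$. These terminals are covered, and the others are discarded.
\end{itemize}
Cuts are always placed with the median side having at most $|T'|/2$ terminals, so terminal counts halve along the recursion and the depth is at most $\log|T|+1$. Siblings live in disjoint domains and each leaf outputs one rectangle inside its domain. Hence grouping rectangles by recursion depth gives the required $\log|T|+1$ sub-collections of pairwise-disjoint rectangles.

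The node step follows Indyk's growth argument. For each coordinate $i$, let $m_i$ be the median of $\{x_i : x\in T'\}$. Use radii $r_0=1 < r_1 < \dots < r_K \le \Delta/2 - 1$ with $r_k = k+1$ and $K = \Theta(\beta)$. Let $N_i(k)$ be the number of terminals with $|x_i-m_i|> r_k$, and consider the sequence of ratios $N_i(k)/N_i(k+1)$. Fix thresholds $1 = t_0 < t_1<\dots$ obeying a recursion of the form $t_{j+1} \approx t_j^{1+\rho}$, tuned so that about $\log_{1+\rho}\log d = O(\rho^{-1}\log\log d)$ steps suffice to go from $|T'|^{\rho}$-type bounds down to $1/d$-type bounds; this is where $\beta$ comes from.

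For each coordinate, one of the following two cases holds.
\begin{itemize}
\item \textbf{Cut case.} At some step the outer count does not shrink much, i.e.\ the slab $\{r_k<|x_i-m_i|\le r_{k+1}\}$ contains few terminals compared to the part beyond it. Then I cut at $m_i\pm r_{k+1}$ (one side, or two cuts). The lost terminals (those in the slab) are charged against the terminals on the far side, which are at most $|T'|/2$ and are recursed on.
\item \textbf{Concentration case.} Otherwise, the counts $N_i(k)$ decay so fast that after $K$ steps $N_i(K) \le |T'|/(2d)$.
\end{itemize}
If no coordinate admits a cut, take
\[
R = \prod_i \bigl[m_i-r_K-1,\ m_i+r_K+1\bigr]\cap D'.
\]
This has width at most $\Delta$, and by a union bound over the $d$ coordinates at least $|T'|/2$ terminals $x$ satisfy $|x_i - m_i|\le r_K$ for all $i$. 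Each such $x$ has $B_\infty(x,1)\subseteq R$ and is covered.

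Covering count: this is the step I expect to be the main obstacle. I need that overall at least $|T|^{1-\rho}$ terminals are covered, i.e.\ that cuts do not discard too much. The plan is to use the threshold sequence to enforce a cut condition of the form
\[
\#\{\text{slab}\}\le \#\{\text{far side}\}^{1-\rho}\cdot(\text{loss budget}).
\]
Then I prove by induction on the recursion that a node with terminal set $T'$ covers at least $|T'|^{1-\rho}$ terminals. The key consequence of the condition is that the larger child already carries enough terminals, via concavity of $t\mapsto t^{1-\rho}$. The delicate point is to choose the growth thresholds so that the two goals hold simultaneously: the cut condition is strong enough for this induction, and the no-cut case still gives a $1/d$ tail within $K=O(\rho^{-1}\log\log d)$ steps. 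I would first try thresholds of the form $N_i(k+1)\le N_i(k)^{1-\rho}$-style exponent decay, following Indyk's analysis, and adjust constants.

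Running time: at each node, for each coordinate, sort (or select) the terminal coordinates and evaluate the $K=O(\beta)$ counts. Computing medians and counts for $d$ coordinates costs $O(|X'|\,d)$ per coordinate scan, where $X'$ is the dataset restricted to the node. Since the domains at one depth are disjoint, each level costs $O(nd^2)$ in total. I expect the extra factor of $d$ to come from checking, for each candidate coordinate, the membership of all points in the current rectangle. Over $\log|T|+1$ levels this gives $O(nd^2\log|T|)$.
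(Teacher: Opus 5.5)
Your skeleton matches the paper's. The recursion either makes an axis-parallel cut, discarding the terminals within distance $1$ of it, or outputs a dense box, and concavity of $z\mapsto z^{1-\rho}$ drives the count. The paper takes this cut/box dichotomy as a black box from Indyk (\Cref{lem:dense-ball-sep}), with the explicit cut condition $\alpha,\gamma\ge\frac{1}{4d}$ and $\alpha^{1-\rho}+\gamma^{1-\rho}\ge1$ that you leave to be tuned.

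\textbf{Gap in the dense-box step.} You \emph{discard} the terminals outside the final rectangle, so your induction ``a node with terminal set $T'$ covers at least $|T'|^{1-\rho}$ terminals'' has no valid base case. Your union bound guarantees only $|T'|/2$ covered terminals, and $|T'|/2<|T'|^{1-\rho}$ whenever $|T'|<2^{1/\rho}$. The missing idea is to add the rectangle and then \emph{recurse on the uncovered terminals $T'\setminus B$ in the same domain}. By induction the node then covers at least $|B|+(|T'|-|B|)^{1-\rho}\ge|B|^{1-\rho}+(|T'|-|B|)^{1-\rho}\ge|T'|^{1-\rho}$, by subadditivity of $z\mapsto z^{1-\rho}$.

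In the cut case, the larger child alone can never suffice, since it has fewer than $|T'|$ terminals. You must add the guarantees of both children, which is exactly what $\alpha^{1-\rho}+\gamma^{1-\rho}\ge1$ provides. Also, a cut at $a=m_i+r_{k+1}$ discards the band $[m_i+r_k,\,m_i+r_{k+2}]$, not the unit slab you measure; use steps of $2$.

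\textbf{Depth and running time.} Your depth argument is false, and the running time rests on it. The side of a cut containing $m_i$ keeps every terminal on the median's side of $m_i$, hence at least half of $T'$, so terminal counts do not halve. Moreover, your cut rule imposes no lower bound on the far side, so a cut may split off only a few terminals, and the depth is not $O(\log|T|)$.

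The paper requires both sides of a cut to carry at least a $\frac{1}{4d}$ fraction of the terminals; the dense-box step removes at least half anyway. So every child has at most $(1-\frac{1}{4d})|T'|$ terminals, and the depth is $O(d\log|T|)$. Nodes on one level have disjoint datasets, so each level costs $O(nd)$. This is where the $d^2$ in $O(nd^2\log|T|)$ comes from, not from a per-level membership check.

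\textbf{Sparsity.} Once you recurse after a dense box, grouping rectangles by depth gives disjoint classes but too many of them. Instead, group by the number of dense-box steps on the path from the root. Each such step halves the terminal count and cuts never increase it, so there are at most $\log|T|+1$ classes. Two dense-box nodes with equal counts cannot be ancestor and descendant, so their lowest common ancestor is a cut node, and their domains (hence rectangles) are disjoint. This is exactly the paper's index-wise merging of the left and right sub-collections.
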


\Cref{thm:sparse-cover-ell-infty} follows by repeated invocations of~\Cref{lem:terminal-reduction}, as follows:

\begin{proof}[Proof of~\Cref{thm:sparse-cover-ell-infty}.]
By scaling the dataset, we may assume that $\Delta = \beta = O(\frac{\log \log d}{\rho})$.
Define $M = \max_{x\in X}\{\|x\|_\infty\}$ and set the domain as $D = [-(M+2), M+2]^d$.
The algorithm works in terminal reduction steps.
Initially, $T_0 = X$.
In step $i$, apply $\Cover(D,X,T_{i-1},\rho)$ of~\Cref{lem:terminal-reduction} to obtain new terminal set $T_i \subseteq T_{i-1}$ and rectangle collection $\cR_i$.
The process halts at the first step $k$ where $T_k = \emptyset$.
Since $|T_i| \leq |T_{i-1}| - |T_{i-1}|^{1-\rho}$ and $|T_0| = n$, we have $k = O(n^{\rho} \cdot \rho^{-1})$ (see~\cite[Claim 3.3]{AP90}).
We let $\cR = \cR_1 \cup \cdots \cup \cR_k$.
\begin{itemize}
    \item (Diameter) \Cref{lem:terminal-reduction} ensures that every rectangle in every $\cR_i$ has width at most $\Delta$.
    \item (Covering) Let $x \in X$, and consider the last step $i$ such that $x \in T_{i-1} \setminus T_i$.
    Then~\Cref{lem:terminal-reduction} implies that there is some $R \in \cR_i$ such that
    $B_\infty (x, \frac{\Delta}{\beta}) = B_\infty (x,1) \subseteq R$.
    \item (Sparsity) \Cref{lem:terminal-reduction} ensures that each point $x \in X$ belongs to $O(\log n)$ rectangles from each $\cR_i$, and hence only to $O(k \log n) = O(n^{\rho} \cdot \rho^{-1} \log n)$ rectangles overall. 
\end{itemize}
Each of the $k = O(n^{\rho} \cdot \rho^{-1})$ invocations of~\Cref{lem:terminal-reduction} takes $O(n \cdot d^2 \log n)$ time, so the total time is $O(n^{1+\rho} \cdot \rho^{-1} d^2 \log n)$.
\end{proof}

\subsection{Proof of~\Cref{lem:terminal-reduction}}

The main remaining part is thus proving~\Cref{lem:terminal-reduction}.
To this end, we utilize powerful tools for $\ell_\infty^d$ that were presented by Indyk in his work on the Nearest-Neighbor Search problem~\cite{Indyk01_ell_infty}.

Informally, Indyk proved that given a finite terminal set in $\ell_\infty^d$, one can either (a) find a dense box (i.e., $\ell_\infty$-ball) of small radius that contains many terminals, or (b) find an axis-aligned hyperplane that has only few terminals close to it, and bisects the rest of the terminals in a roughly balanced way.

\begin{definition}\label[definition]{def:separator}
    Let $T \subseteq \ell_\infty^d$, and $\alpha, \beta, \gamma \in [0,1]$ such that $\alpha + \beta + \gamma = 1$.
    An $(\alpha, \beta ,\gamma)$-separator of $T$ is an axis-aligned hyperplane 
    $H_i (t) = \{x \in \mathbb{R}^d \mid x_i = t\}$
    such that:
    \begin{itemize}
        \item $T_{\Left} := \{x \in T \mid x_i < t-1\}$ is of size $|T_{\Left}| = \alpha |T|$,
        \item $T_{\Mid} := \{x \in T \mid t-1 \leq x_i \leq t+1\}$ is of size $|T_{\Mid}| = \beta |T|$,
        \item $T_{\Right} := \{x \in T \mid t+1 < x_i\}$ is of size $|T_{\Right}| = \gamma |T|$.
    \end{itemize}
\end{definition}

\begin{lemma}[Restatement of~\protect{\cite[Lemma 1]{Indyk01_ell_infty}}]\label{lem:dense-ball-sep}
Given a set $T \subseteq \ell_\infty^d$ and parameter $0 < \rho \leq 1$, there is an $O(d|T|)$-time algorithm whose output is one of the following options:
\begin{enumerate}
    \item (Separator) An $(\alpha,\beta,\gamma)$-separator of $T$ such that $\alpha, \gamma \geq \frac{1}{4d}$ and $\alpha^{1-\rho} + \gamma^{1-\rho} \geq 1$.

    \item (Dense Box) A point $x^* \in T$ such that, for $r = O(\frac{\log \log d}{\rho})$, we have $|B_T(x^*, r)| \geq \frac{|T|}{2}$.
\end{enumerate}
\end{lemma}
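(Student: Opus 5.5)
I will reprove Indyk's lemma directly, by a coordinate-by-coordinate search. The idea is that either some coordinate admits a balanced cut with few points near the cut, or every coordinate has a heavy "core" of short length, and then the cores together form a dense box.

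Fix a coordinate $i$ and sort $T$ by $x_i$. This costs $O(|T|\log|T|)$ per coordinate. To reach the stated $O(d|T|)$ time I will instead use linear-time selection (median and quantile finding), and the analysis below only needs selection. For $j=0,1,2,\dots$, let $I_j$ be the interval around the coordinate-$i$ median $m_i$ of half-width $2j+1$, and let $n_j$ be the number of points of $T$ with $x_i \in I_j$. The tails outside $I_j$ on the two sides are the candidates for $T_{\Left}$ and $T_{\Right}$, obtained by placing the hyperplane at $t=m_i\pm 2j$.

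The growth argument goes as follows. Suppose that for every $j\le r/2$ the one-sided move from $I_j$ to $I_{j+1}$ fails the separator condition. The cut at $t=m_i+2j+1$ has a middle slab of width $2$ consisting of the points in $I_{j+1}\setminus I_j$ on that side. Its left part has relative size $\alpha\ge 1/2$ and its right part has size $\gamma=|T\setminus I_{j+1}|/|T|$ restricted to that side. Failure of $\alpha^{1-\rho}+\gamma^{1-\rho}\ge 1$ forces the right tail to shrink multiplicatively. Concretely, writing $u_j$ for the fraction outside $I_j$ on the right, failure gives $(1-u_j)^{1-\rho}+u_{j+1}^{1-\rho}<1$. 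Hence $u_{j+1}^{1-\rho}< 1-(1-u_j)^{1-\rho}\approx(1-\rho)u_j$ for small $u_j$, so $u_{j+1}\lesssim u_j^{1/(1-\rho)}$. This is doubly exponential decay with exponent $1/(1-\rho)\approx 1+\rho$.

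Starting from $u_0\le 1/2$, after $j=O(\rho^{-1}\log\log d)$ steps we get $u_j<1/(8d)$. At that point the separator condition $\gamma\ge 1/(4d)$ can no longer be met anyway, which is exactly why the threshold $1/(4d)$ appears. So in each coordinate, either some $j$ yields a valid $(\alpha,\beta,\gamma)$-separator (checking $\alpha,\gamma\ge 1/(4d)$ as well), or both tails outside the interval of half-width $r=O(\rho^{-1}\log\log d)$ around $m_i$ carry less than a $1/(4d)$ fraction of the points.

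If no coordinate yields a separator, a union bound over the $d$ coordinates and both sides shows that at least half of $T$ lies in the box $\prod_i [m_i-r,m_i+r]$. Taking any $x^*$ in that box gives $B_T(x^*,2r)\supseteq$ that half, which is the dense-box option after rescaling $r$.

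For the running time, each step $j$ requires counting the points in a slab. I will maintain the points of coordinate $i$ in buckets of width $1$ around the median, so that the work per coordinate is $O(|T|)$ and the total is $O(d|T|)$.

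The main obstacle is the delicate inequality step. I need to verify that failure of the separator condition genuinely gives $u_{j+1}\le u_j^{1+\Omega(\rho)}$, including the regime where $u_j$ is not small. I also need to handle the balance side correctly: the "$\alpha$" in the cut should be the whole complementary part, not just $1/2$. For large $u_j$ I would first use the elementary bound $a^{1-\rho}+b^{1-\rho}\ge (a+b)^{1-\rho}$ to get a constant-factor decrease within $O(1/\rho)$ steps, and only then switch to the doubly exponential regime.
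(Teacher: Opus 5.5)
Your proposal is correct in substance, but it takes a different route from the paper. The paper does not prove this lemma. It imports Indyk's Lemma~1, whose separator satisfies the stronger one-sided condition $(\beta+\gamma)^{1+\rho}\le\gamma$. The Remark then weakens this to the symmetric form via $1=\alpha+\beta+\gamma\le\alpha+\gamma^{1/(1+\rho)}\le\alpha^{1-\rho}+\gamma^{1-\rho}$.

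You instead give a self-contained argument aimed directly at the weaker inequality:
\begin{itemize}
\item anchoring at the coordinate median makes $\alpha\ge\frac12$ automatic;
\item failure of the condition on consecutive slabs forces doubly-exponential decay of the tail mass;
\item so each of the $2d$ tails either produces a valid separator within $O(\rho^{-1}\log\log d)$ slabs, or has mass $<\frac{1}{4d}$ beyond that radius;
\item a union bound over the tails then yields the dense box.
\end{itemize}
The citation buys Indyk's stronger guarantee, which is what the size analysis needs when $T_{\Mid}$ is sent to both recursive calls (Indyk's data structure, and the average-sparsity variant sketched in \Cref{sec:l_infty-spanners}). Your route buys an elementary proof tailored to exactly the inequality used by the induction in \Cref{lem:terminal-reduction}. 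In fact the same scan would also certify Indyk's condition. In your right-side cut, $\beta+\gamma$ is the tail mass $u_j$ and $\gamma\ge u_{j+1}$, so failure of Indyk's condition reads $u_{j+1}<u_j^{1+\rho}$, again doubly-exponential decay.

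A few loose ends remain, all easy.
\begin{itemize}
\item The step you flag as the main obstacle is a one-liner valid in every regime. Since $(1-u)^{1-\rho}\ge 1-u$ for $u\in[0,1]$, failure gives $u_{j+1}^{1-\rho}<u_j$, i.e., $u_{j+1}<u_j^{1/(1-\rho)}\le u_j^{1+\rho}$. No two-phase argument is needed, and your subadditivity fallback would only give the useless $u_{j+1}<u_j$.
\item State the stopping rule as a contradiction. If $u_J\ge\frac{1}{4d}$, then every cut $j<J$ has $\gamma\ge u_J\ge\frac{1}{4d}$ and $\alpha\ge\frac12$. Not being a valid separator therefore makes each such cut a decay step, so $u_J<2^{-(1+\rho)^J}\le\frac{1}{4d}$ for $J=\lceil\log_{1+\rho}\log_2(4d)\rceil=O(\rho^{-1}\log\log d)$.
\item With closed middle slabs $[t-1,t+1]$ at spacing exactly $2$, a point lying on a shared slab boundary is counted neither in $\gamma$ of one cut nor in $\alpha$ of the next. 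This breaks the chaining $\alpha\ge 1-u_j$, $\gamma\ge u_{j+1}$. Spacing $3$ fixes it at the cost of a constant factor in $r$: take $u_j:=|\{x\in T: x_i\ge m_i+1+3j\}|/|T|$ with cuts at $t=m_i+3j+2$.
\item Your bucketing gives $O(|T|+J)$ per coordinate, not $O(|T|)$. To close the gap, note that an empty slab with $\gamma\ge\frac{1}{4d}$ is already a valid separator, since then $\alpha+\gamma=1$. Hence at most $\min\{|T|+1,J\}$ buckets are ever inspected.
\end{itemize}
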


\begin{remark}
    In~\cite{Indyk01_ell_infty}, the inequality in the separator case is $\log_{\frac{1}{\beta+\gamma}}  \left( \frac{\beta+ \gamma}{\gamma} \right) \leq \rho$, which is equivalent to $\beta \leq \gamma^{1/(1+\rho)} - \gamma$.
    This implies our inequality:
    $$1=\alpha+\beta+\gamma \leq \alpha+\gamma^{1/(1+\rho)}\leq \alpha+\gamma^{1-\rho} \leq \alpha^{1-\rho}+\gamma^{1-\rho}$$ 
    (where the middle inequality utilizes the fact that $1/(1+\rho) \geq 1-\rho$).
\end{remark}

We are now ready to give the algorithm of~\Cref{lem:terminal-reduction}.
The basic idea is to apply~\Cref{lem:dense-ball-sep} on the given terminal set, and recurse according to its output.
If we get a dense box, we just add it (or, more precisely, its intersection with the domain) to the rectangle cover and recurse on the remaining, uncovered terminals.
If we get a separator, we ``give up'' on covering the few terminals in the middle (namely, include them in $T^{\out}$), and recurse on the left ones and on the right ones.

\begin{mdframed}[style=MyFrame, nobreak=true, leftmargin=-1em,
  rightmargin=-1em,
  innerleftmargin=6pt,
  innerrightmargin=6pt]

\begin{center}
\textbf{Algorithm $\Cover(D, X,T,\rho)$}
\end{center}

\medskip
\noindent\textbf{Input:} rectangular domain $D$, dataset $X \subseteq D$, terminal set $T \subseteq X$, parameter $0 < \rho \leq 1$.

\medskip
\noindent\textbf{Output:} new terminal set $T^{\out}$, rectangle collection $\cR$.

\begin{itemize}\setlength{\itemsep}{2pt}
    \item Base Case: If $T = \emptyset$, return $(\emptyset,\emptyset)$.
    \item Recursive Step: Apply~\Cref{lem:dense-ball-sep} on $T$ with parameter $\rho$.
    \begin{enumerate}\setlength{\itemsep}{2pt}
        \vspace{0.5\baselineskip} %
        \item Case 1: 
        If found $(\alpha,\beta,\gamma)$-separator $H_i (t)$ such that $\alpha, \gamma \geq \frac{1}{4d}$ and $\alpha^{1-\rho} + \gamma^{1-\rho} \geq 1$.
        \begin{itemize}\setlength{\itemsep}{2pt}
            \vspace{0.5\baselineskip} %
            \item Compute the sets $T_{\Left},T_{\Mid},T_{\Right}$ corresponding to $H_i (t)$ (see~\Cref{def:separator}).                        \item Define $D_{\Left} = \{x \in D \mid x_i < t\}$ and $D_{\Right} = \{x \in D \mid x_i > t\}$.

            \item Compute $X_{\Left} = D_{\Left} \cap X$ and $X_{\Right} = D_{\Right} \cap X$.
            \item Let $(T_{\Left}^{\out}, \cR_{\Left}) \gets \Cover(D_{\Left}, X_{\Left}, T_{\Left}, \rho)$ and $(T_{\Right}^{\out}, \cR_{\Right}) \gets \Cover(D_{\Right}, X_{\Right}, T_{\Right}, \rho)$.
            \item Return $(T^{\out} := T_{\Left}^{\out} \cup T_{\Mid} \cup T_{\Right}^{\out}, \cR := \cR_{\Left} \cup \cR_{\Right})$.
        \end{itemize}

        \vspace{0.5\baselineskip} %
        \item Case 2: 
        Else, found $x^* \in T$ such that $|B_T(x^*, r)| \geq \frac{|T|}{2}$ for $r = O(\frac{\log \log d}{\rho})$.
        \begin{itemize}\setlength{\itemsep}{2pt}
            \vspace{0.5\baselineskip} %
            \item Let $(T^{\out}, \cR') \gets \Cover(D, X, T \setminus B_T (x^*, r), \rho)$.
            \item Define $R_{\new} = B_\infty (x^*, r+1) \cap D$, with associated cluster $C(R_{\new}) = X \cap R_{\new}$.
            \item Return 
            $\left( 
                T^{\out}, 
                \cR := \cR' \cup \{R_{\new}\} 
            \right)$.
        \end{itemize}
    \end{enumerate}
\end{itemize}
\end{mdframed}

\paragraph{Correctness.}
We prove the correctness of $\Cover(D,X,T,\rho)$ by induction on $|T|$.
The base case $|T| = 0$ is trivial, so we assume $|T| > 0$ from now on, and denote the output of $\Cover(D,X,T,\rho)$ by $(T^{\out}, \cC)$.
Note that the induction hypothesis applies to all recursive invocations of $\Cover$.
Indeed, in Case 1 we invoke it with terminal sets of size $(1-\alpha)|T|$ and $(1-\gamma)|T|$, and in Case 2 with a terminal set of size $\leq \frac{1}{2} |T|$.
Furthermore, in both cases, the terminals in the recursive call all have $\ell_\infty$ distance of more than $1$ to the boundary of the domain in this call.

\proofsubparagraph*{Diameter.}
    The fact that each rectangle in $\cR$ has width $O(\frac{\log \log d}{\rho})$ is
    immediate by induction: 
    In Case 1 we do not add any new rectangles other than the ones recursively constructed. 
    In Case 2, we add one more rectangle of width $\leq 2(r+1) = O(\frac{\log \log d}{\rho})$.

\proofsubparagraph*{Covering.}
    Let $x \in T \setminus T^{\out}$, and we should show there is some $R \in \cR$ such that $B_\infty (x,1) \subseteq R$.
    
    We start with Case 1, and denote by $H_i (t)$ the $(\alpha,\beta,\gamma)$ separator of this case.
    We have that $T \setminus T^{\out} = T \setminus (T_{\Left}^{\out} \cup T_{\Mid} \cup T_{\Right}^{\out})$, so either $x \in T_{\Left} \setminus T_{\Left}^{\out}$ or $x \in T_{\Right} \setminus T_{\Right}^{\out}$.
    If the former (resp., latter) holds, then by induction, there is $R \in \cR_{\Left}$ (resp. $R \in \cR_{\Right}$) s.t.\ $B_\infty (x,1) \subseteq R$.

    Next, consider Case 2.
    Suppose first that $x \in B_T (x^*, r)$.
    In this subcase, we argue that the newly added rectangle $R_{\new} = B_\infty (x^*, r+1) \cap D$ contains $B_\infty (x,1)$.
    Indeed, since $x \in T$, its $\ell_\infty$-distance from the boundary of $D$ is more than $1$, so this follows by triangle inequality.
    Finally, suppose $x \in (T \setminus B_T (x^*, r)) \setminus T^{\out}$.
    Recall that $T^{\out}$ is the new terminal set returned by the recursive call $\Cover(D, X, T \setminus B_T (x^*, r), \rho)$, so by induction, there exists $R \in \cR'$ that contains $B_\infty (x,1)$.

\proofsubparagraph*{Sparsity.}
    We now show that $\cR$ can be partitioned into at most $\log |T| + 1$ sub-collections, each consisting of pairwise-disjoint rectangles.
    
    First consider Case 1.
    Note that rectangles in $\cR_{\Left}$ (resp., $\cR_{\Right}$) are contained within $D_{\Left}$ (resp., $D_{\Right}$) because they were returned by invoking $\Cover$ with domain $D_{\Left}$ (resp., $D_{\Right}$).
    By induction hypothesis, both $\cR_{\Left}$ and $\cR_{\Right}$ can be partitioned into at most $\log|T| + 1$ sub-collections of pairwise-disjoint rectangles, and we can concatenate the $i$-th sub-collection of $\cR_{\Left}$ with the $i$-the sub-collection of $\cR_{\Right}$ to obtain the $i$-th sub-collection of $\cR$.

    For Case 2, we add one more rectangle $R_{\new}$ to the recursively constructed $\cR'$.
    By induction, the latter can be partitioned into at most $\log \big| T \setminus B_T (x^*,r) \big| + 1 \leq \log \frac{|T|}{2} + 1 = \log |T|$ sub-collections of pairwise-disjoint rectangles.
    So, we simply put $R_{\new}$ in a new sub-collection $\{R_{\new}\}$.

\proofsubparagraph*{Terminal reduction.}
    To conclude the correctness proof, we show that $|T^{\out}| \leq |T| - |T|^{1-\rho}$.

    In Case 1, we have
    \begin{align*}
        |T^{\out}| 
        &= |T_{\Left}^{\out}| + |T_{\Right}^{\out}| \\
        &\leq (|T_{\Left}| - |T_{\Left}|^{1-\rho}) + (|T_{\Right}| - |T_{\Right}|^{1-\rho}) && \text{by induction hypothesis,}\\
        &= (\alpha + \gamma) |T| - (\alpha^{1-\rho} + \gamma^{1-\rho}) |T|^{1-\rho}   && \text{as $|T_{\Left}| = \alpha |T|$, $|T_{\Right}| = \gamma |T|$,}\\
        &\leq |T| - |T|^{1-\rho} && \text{as $\alpha + \gamma \leq 1$, and $\alpha^{1-\rho} + \gamma^{1-\rho} \geq 1$.}
    \end{align*}
    In Case 2, denote $B := B_T (x^*, r)$.
    \begin{align*}
        |T^{\out}| 
        &\leq (|T|-|B|) - (|T|-|B|)^{1-\rho} && \text{by induction hypothesis,} \\
        &\leq |T| - (|B|^{1-\rho} + (|T|-|B|)^{1-\rho} ) && \text{using $|B| \geq |B|^{1-\rho}$,}\\
        &\leq |T| - (|B| + |T| - |B|)^{1-\rho} && \text{by sub-additivity of $f(z) = z^{1-\rho}$} \\
        &= |T| - |T|^{1-\rho}.
    \end{align*}

\paragraph{Running time.}
Finally, we analyze the running time of $\Cover(D,X,T,\rho)$.
Consider the recursion tree of the algorithm, where each node is associated with an instance of dataset $X'$ and terminal set $T'$.
The work done inside this node consists of applying~\Cref{lem:dense-ball-sep}, which takes $O(d|T'|) \leq O(d|X'|)$ time, and spending another $O(d|X'|)$ time to partition $X'$ into $X'_{\Left},X'_{\Right}$ or to compute $B_{T'} (x^*, r)$ and the rectangle $R_{\new}$ with its associated cluster $C(R_{\new})$.
Note that the datasets of instances from a given level $i$ of the recursion tree are disjoint subsets of $X$,
so the total time spent in each level 
$i$ is $O(|X| \cdot d)$.

It thus remains to bound the depth of the recursion tree by $O(d \log |T|)$.
For this, we assert that the instance of a node in depth $i$ has terminal set of size at most $(1-\frac{1}{4d})^i |T|$, which means that the leaves (with terminal sets of size $0$) must have depth $O(d \log |T|)$.
The assertion follows easily by induction on $i$, using the fact that $\alpha,\gamma \geq \frac{1}{4d}$ in Case 1, and that $B_{T} (x^*, r) \geq \frac{|T|}{2} \geq \frac{1}{4d} |T|$ in Case 2.

\medskip
This concludes the proof of~\Cref{lem:terminal-reduction}.

\subsection{From Neighborhood Cover to Padded Decomposition}\label{sec:from-snc-to-padded}

We now explain the details of the reduction from our rectangle-based sparse neighborhood cover to padded decomposition, which gives the proof of~\Cref{thm:fast-padded-decom}.

The reduction is a slight modification of the one by~\cite{CF25} to ensure fast running time.
The main point is the following: 
Given a $(\beta,s,\Delta)$-neighborhood cover $\cC$, the reduction of~\cite{CF25} uses the ``boundary distances'' $\partial_C (x) = \min_{y \in X \setminus {C}} \{d_X (x,y)\}$ defined for every point $x \in X$ and cluster $C \in \cC$ to construct an $(O(\beta \log s), \frac{1}{4\beta}, \Delta)$-padded decomposition.
However, this ``discrete'' definition of $\partial_C$ is problematic for us to compute in less than $O(n^2)$ time.
To overcome this, we utilize the fact that our clustering is in fact induced by intersections with ``continuous'' axis-aligned rectangles in $\mathbb{R}^d$, which allows to redefine the values $\partial_C (x)$ while preserving all their important properties.

From now on, assume~\Cref{thm:sparse-cover-ell-infty} was applied with diameter bound $\Delta$ and parameter $\rho$ to be chosen later, let $\cR$ be the resulting rectangle collection, and $\cC(\cR)$ the resulting cluster collection which forms a $(\beta, s, \Delta)$-neighborhood cover with $\beta=O(\rho^{-1}\log\log d)$ and $s=O(n^\rho \rho^{-1}\log n)$. 
Let $x \in X$ and $R \in \cR$ with corresponding cluster $C(R)$.
We define $\partial_{C(R)} (x)$ as the $\ell_\infty$-distance between $x$ and $\mathbb{R}^d \setminus R$. 
Formally, let $R = I_1 \times \cdots \times I_d$, where the interval $I_i \subseteq \mathbb{R}$ has boundary points $a_i \leq b_i$.
Then,
\begin{equation}\label{eq:boundary-distances}
    \partial_{C(R)} (x) = \inf\{ \|x-y\|_\infty : y \in \mathbb{R}^d \setminus R\}
    =  \max\left\{0,~ \min_{i=1,\dots,d} \big\{\min\{x_i-a_i, b_i - x_i\}\big\} \right\}.
\end{equation}
Note that we can compute all of the nonzero values in $\{\partial_{C(R)} (x) : x\in X, C(R) \in \cC(\cR)\}$ in just $O(n \cdot s \cdot d) = O(n^{1+\rho} \cdot \rho^{-1} d)$ time, since $\partial_{C(R)} (x)$ can be nonzero only if $x$ belongs to the rectangle $R$, and there are at most $s$ such rectangles in $\cR$.
This is the crucial point where we use the fact our neighborhood cover $\cC(\cR)$ is induced by intersections with axis-aligned rectangles to allow fast computation.

We now use these values to define the padded decomposition exactly as in~\cite{CF25}.
Let $\text{Texp}(\lambda)$ be the truncated exponential distribution with parameter $\lambda$, with density function $g(z) = \frac{\lambda e^{-\lambda z}}{1-e^{-\lambda}}$ for $z \in [0,1]$.
For every cluster $C \in \cC(\cR)$, we sample (independently) $\delta_C \sim \text{Texp}(2+2s)$, and define the function $f_C : X \to \mathbb{R}$ by
\[
f_C (x) = \delta_C \cdot \frac{\Delta}{\beta} + \partial_C (x).
\]
Define the partition $\cP = \{P_C \}_{C \in \cC(\cR)}$ where each point $x \in X$ joins the cluster $P_C$ such that $f_C (x)$ is maximized.

First, we show that $\cP$ can be computed in just $O(n \cdot s \cdot d)$ time.
Fix $x \in X$, and let $\cR_x$ be the rectangles in $\cR$ that contain $x$, which are at most $s$.
We claim that $x$ can only join $P_C$ if $C = C(R)$ for some $R \in \cR_x$, so we only need to compute $s$ values of $f_C (x)$ to determine which $P_C$ contains $x$.
This is seen by the observing the following:
\begin{itemize}
    \item By the covering property, some $R \in \cR_x$ contains $B_\infty (x,\frac{\Delta}{\beta})$, so $f_{C(R)} (x) \geq \partial_{C(R)} (x) \geq \frac{\Delta}{\beta}$.
    \item On the other hand, if $x \notin R$, then $\partial_{C(R)} (x) = 0$, and hence $f_{C(R)} (x) < \frac{\Delta}{\beta}$ w.p.\ $1$.
\end{itemize}

Note that the above argument showed that $P_{C(R)} \subseteq C(R) \subseteq R$.
Hence, the diameter of $P_{C(R)}$ is at most the width of $R$, which is at most $\Delta$, so $\cP$ is indeed $\Delta$-bounded.

The proof of the padding property from~\cite{CF25} can be applied verbatim.
This is because the only property of the values $\partial_C (x)$ used by~\cite{CF25} for this proof is the triangle inequality $|\partial_C (x) - \partial_C (y)| \leq d_X (x,y) = \|x-y\|_\infty$, which also holds for our definition in~\Cref{eq:boundary-distances}.

We thus obtained that $\cP$ is indeed a partition sampled from a $(O(\beta \log s), \frac{1}{4\beta}, \Delta)$-padded decomposition.
The running time is dominated by computing the rectangle neighborhood cover in~\Cref{thm:sparse-cover-ell-infty}.
Choosing $\rho=\frac{\log\log n}{\log n}$ completes the proof of~\Cref{thm:fast-padded-decom}.

\subsection{Application: Spanners in $\ell_\infty^d$}\label{sec:l_infty-spanners}

A \emph{spanner with stretch $t \ge 1$} for a finite metric $(X, d_X)$ is a weighted graph $G = (X, E, w)$ with edge weights $w(u, v) = d_X(u, v)$ whose shortest-path distances $d_G$ satisfy $d_X(x, y) \leq d_G(x, y) \leq t \cdot d_X(x, y)$ for all $x, y \in X$. 
The main quality measure for spanners is the tradeoff between the stretch and the \emph{sparsity} (i.e., number of edges $|E|$).
Neighborhood covers are well-known to yield spanner algorithms~\cite{AP90, ABCP98, HIS13}. 
In our case, by applying the algorithm of~\Cref{thm:sparse-cover-ell-infty} with $O(\log D_X)$ different diameter bounds, where $D_X = \frac{\max_{x,y \in X}\|x-y\|_\infty}{\min_{x\neq y \in X}\|x-y\|_\infty}$ is the \emph{aspect ratio} of the metric $(X,d_X)$, we obtain the following. 

\begin{theorem}\label{thm:spanner}
There is a deterministic algorithm that, given dataset $X \subseteq \ell_\infty^d$ of size $|X| = n$ with aspect ratio $D_X$, and parameter $0 < \rho < 1$,
outputs a spanner for $(X,d_X)$ with stretch $O(\rho^{-1}\log\log d)$ and sparsity $O(n^{1+\rho}\rho^{-1}\log n \log D_X)$.
The algorithm runs in $O(n^{1+\rho}\rho^{-1}d^2\log n \log D_X)$ time.
\end{theorem}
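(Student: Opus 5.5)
We will follow the standard cover-to-spanner construction: build one neighborhood cover per distance scale and connect each cluster by a star. Let $\delta_{\min} = \min_{x\neq y}\|x-y\|_\infty$. For each $j = 0,1,\dots,\lceil \log_2 D_X\rceil$, set $\Delta_j = 2^{j}\beta\,\delta_{\min}$, where $\beta = O(\rho^{-1}\log\log d)$ is as in \Cref{thm:sparse-cover-ell-infty}. Apply \Cref{thm:sparse-cover-ell-infty} with diameter bound $\Delta_j$ and parameter $\rho$ to obtain a rectangle collection $\cR_j$ and clusters $\cC(\cR_j)$. For every nonempty cluster $C = X\cap R$, fix an arbitrary center $c_C\in C$ and add the star edges $\{c_C, y\}$ for all $y\in C\setminus\{c_C\}$, with weight $\|c_C-y\|_\infty$. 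The output graph $G$ is the union of all these stars over all scales.

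\textbf{Size and running time.} For a fixed scale, each star has $|C|-1$ edges. Summing over clusters gives at most $\sum_C |C| = \sum_{x} |\{C\ni x\}| \le ns$ edges, where $s = O(n^\rho\rho^{-1}\log n)$ by the sparsity property. Over the $O(\log D_X)$ scales this gives the claimed sparsity bound.

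For time, the cover algorithm already costs $O(n^{1+\rho}\rho^{-1}d^2\log n)$ per scale. The clusters $C(R)$ are explicitly constructed inside that algorithm (they are computed when $R_{\new}$ is created), so listing them and building the stars costs $O(nsd)$ per scale, which is dominated by the cover time. We should also check that computing $\delta_{\min}$ is not a bottleneck. Since the input $D_X$ is given, we can instead use the diameter $M$, which a 2-approximation computes in $O(nd)$ time, and take scales $\Delta_j = 2^{-j}\cdot 2M$. This keeps the number of scales at $O(\log D_X)$ without computing $\delta_{\min}$.

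\textbf{Stretch.} Weights equal true distances and every path in $G$ is a walk in the metric, so $d_G \ge d_X$. For the upper bound, fix $x\neq y$ with $\|x-y\|_\infty = r$. Choose the scale $j$ with $\Delta_j/\beta \ge r > \Delta_j/(2\beta)$; such a $j$ exists because $\delta_{\min}\le r\le \max$-distance and consecutive scales differ by a factor of $2$. By the covering property, some $R\in\cR_j$ contains $B_\infty(x,\Delta_j/\beta)\ni y$, so $x$ and $y$ lie in the same cluster $C$. The star then gives
\[
d_G(x,y)\le \|x-c_C\|_\infty+\|c_C-y\|_\infty \le 2\Delta_j < 4\beta r,
\]
using that $C\subseteq R$ has $\ell_\infty$-diameter at most the width of $R$, which is at most $\Delta_j$. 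Hence the stretch is $O(\beta) = O(\rho^{-1}\log\log d)$.

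\textbf{Main obstacle.} The only delicate point is the scale bookkeeping. We need the scales to cover every pairwise distance $r\in[\delta_{\min}, D_X\delta_{\min}]$ with a factor-2 gap, using only $O(\log D_X)$ scales and without an $O(n^2)$ computation of distances. Choosing the scales geometrically from an $O(nd)$-time estimate of the diameter handles this. Everything else is a direct consequence of the diameter, covering and sparsity guarantees of \Cref{thm:sparse-cover-ell-infty}.
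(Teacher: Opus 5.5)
Your construction and analysis are correct and match the paper's proof: one cover from \Cref{thm:sparse-cover-ell-infty} per scale $2^j\beta\delta_{\min}$, a star on each cluster, at most $\sum_C|C|\le ns$ edges per scale, and stretch $2\Delta_j<4\beta r$ via the covering radius $\Delta_j/\beta$ (the paper just normalizes $\delta_{\min}=1$). The one slip is in your aside: the scales $\Delta_j=2^{-j}\cdot 2M$ drop the factor $\beta$, so every covering radius $\Delta_j/\beta\le 2M/\beta$ misses pairs at distance in $(2M/\beta,M]$. Use $\Delta_j=2^{1-j}\beta M$ instead, or note that the exact $\ell_\infty$ diameter $\max_i(\max_{x\in X}x_i-\min_{x\in X}x_i)$ takes $O(nd)$ time, so $\delta_{\min}=\mathrm{diam}/D_X$ is available and your original scales work.
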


The proof is a standard argument, included in~\Cref{sec:spanners-appendix} for completeness.
We remark that, while we mainly focused on neighborhood covers as a tool for padded decompositions,
there is a relaxed notion of such covers with \emph{average sparsity} guarantee which is enough to yield spanners.
In the average sparsity variant, the property that each $x \in X$ is contained in at most $s$ clusters is replaced by $\sum_{C \in \cC} |C| \leq n \cdot s$.
For this relaxed notion, our~\Cref{thm:sparse-cover-ell-infty} can be slightly improved to achieve $s = O(n^{\rho} \log n)$, removing a $\rho^{-1}$ factor, and this improvement propagates to the sparsity and running time of~\Cref{thm:spanner}.

We omit the full details, and only sketch the argument as follows:
Adapt the $\Cover$ algorithm so that instead of ``giving up'' on the middle $T_\Mid$, it includes it in both recursive calls on the left and on the right.
In this case, just a single invocation of $\Cover$ produces all of the clusters $\cC$.
The analysis of the running time and the sparsity is by bounding the depth of the recursion tree and the total size of instances in each of its levels, similarly to~\cite{Indyk01_ell_infty}.

\section{Separating Decomposition in $\ell_2^d$}
\label{sec:sep-decom}

This section proves \cref{thm:fast-sep-decom} 
by providing a separating-decomposition algorithm with quality $\Tilde{O}(\sqrt{\log n})$ 
that runs in almost-linear time.
A central component in the proof is a powerful decomposition technique for the entire $\ell_2^d$ space from \cite[Section 3]{AI06}, 
which runs in exponential time in the dimension of the space, as follows.

\begin{theorem}[\protect{\cite[Section 3]{AI06}}]
\label{thm:exp-sep-decom}
    There is a separating decomposition algorithm for $n$-point metrics in $\ell_2^d$ with quality $O(\sqrt{d})$ and running time $\tilde{O}(n\cdot 2^{d\log d})$.
    The algorithm is Las Vegas, meaning that with high probability it reports a partition sampled from a separating decomposition, and with the remaining probability it reports ``Fail''.
\end{theorem}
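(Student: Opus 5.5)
We follow the ball-carving approach of Andoni and Indyk: cover $\mathbb{R}^d$ by randomly shifted grids of balls, and assign each point to the first ball that contains it. Set the ball radius to $w := \Delta/2$. Then every cluster lies inside a single ball and has diameter at most $\Delta$, so the output is always $\Delta$-bounded.

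\textbf{The carving process.} Fix a grid $G_0$ of side length $4w$. Its copies $G_u = G_0 + u$, for random shifts $u \in [0,4w)^d$, are drawn independently for $u_1, u_2, \dots$. Around each grid point of $G_{u_j}$ we place the ball of radius $w$. These balls are pairwise disjoint, since distinct grid points are at distance at least $4w > 2w$. A point $x$ joins the ball of the first grid $G_{u_j}$ for which $x$ lies in some ball of that grid.

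\textbf{Separation bound.} For a single random shift, the probability that a fixed point is covered is $p = \mathrm{vol}(B_2^d(w))/(4w)^d = 2^{-O(d\log d)}$. For two points $x,y$ at distance $\epsilon$, consider the first grid that covers $x$ or $y$. They are separated only if that grid covers exactly one of them. Hence
\[
\Pr[\text{separated}] \le \frac{\mathrm{vol}(B(x,w) \,\triangle\, B(y,w))}{\mathrm{vol}(B(x,w) \cup B(y,w))}.
\]
The standard estimate for the symmetric difference of two Euclidean balls gives this ratio as $O(\sqrt d\, \epsilon / w)$. This follows because the $(d-1)$-dimensional cross section of the ball compared with its volume scales as $\sqrt d / w$. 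The result is $\alpha = O(\sqrt d)$.

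\textbf{Running time.} For each point we need to find which ball (if any) of $G_{u_j}$ contains it; this is the nearest grid point, computed by rounding in $O(d)$ time. We truncate at $J = O(p^{-1} \log n) = 2^{O(d\log d)} \log n$ grids. A point escapes all $J$ grids with probability at most $(1-p)^J \le n^{-c}$. If some point is uncovered, the algorithm outputs ``Fail''. Rather than iterating over each point separately, we process the grids in order and keep only the still-unassigned points. The total time is $O(n d J) = \tilde O(n \cdot 2^{d\log d})$.

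\textbf{Las Vegas guarantee.} Conditioned on no failure, the truncated process coincides with the infinite process on this event, so the separation bound changes by at most the failure probability. The main obstacle is the geometric symmetric-difference bound, in particular getting the $\sqrt d$ factor rather than $d$. I would handle it by bounding $\mathrm{vol}(B \setminus (B + v))$ for $\|v\|=\epsilon$. This is at most $\epsilon$ times the volume of the $(d-1)$-dimensional unit-ball section times $w^{d-1}$, and $V_{d-1}/V_d = \Theta(\sqrt d)$.
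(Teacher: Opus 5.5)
Your proposal is correct and is essentially the ball-carving argument of Andoni and Indyk, which the paper cites without reproducing: randomly shifted grids of radius-$w$ balls with spacing $4w$, assignment to the first covering ball, the bound $\mathrm{vol}(B(x,w)\triangle B(y,w))/\mathrm{vol}(B(x,w)\cup B(y,w)) = O(\sqrt{d}\,\epsilon/w)$ via $V_{d-1}/V_d=\Theta(\sqrt{d})$, and truncation after $O(p^{-1}\log n)$ grids. Two small tightenings: state the Las Vegas step multiplicatively, $\Pr[\text{separated}\mid\text{success}]\le \Pr[\text{separated in the untruncated process}]/\Pr[\text{success}]$, because an additive error equal to the failure probability is not proportional to $\|x-y\|_2$ for very close pairs (the same issue the paper handles separately after JL); and note that $p^{-1}=4^d/V_d=2^{(1/2+o(1))d\log d}$, which yields the stated $2^{d\log d}$ rather than only $2^{O(d\log d)}$.
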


We are now ready to prove \cref{thm:fast-sep-decom}.

\begin{proof}[Proof of \cref{thm:fast-sep-decom}]
Assume first that $d\leq O(\log n)$. Essentially, this assumption is valid due to the JL-transform~\cite{JL84}, and we formally justify it at the end of the proof.
Without loss of generality, we may assume that $\Delta = 1$ (by rescaling), that $\rho = \Omega(\frac{1}{\log n})$ (since a smaller $\rho$ does not improve the running time), and that $\rho^{-1}$ is an integer.

Arbitrarily partition the coordinates as $I_1 \cup \cdots \cup I_{\rho^{-1}} = [d]$, where each $I_j$ is of size at most $O(\rho \cdot d)$.
For every $I_j$ and $x \in X$, let $x[I_j]$ denote the restriction of $x$ to the coordinates in $I_j$, and let $X[I_j]=\{x[I_j] \mid x \in X\} \subset \ell_2^{O(\rho \cdot d)}$. 
For each $I_j$, apply~\cref{thm:exp-sep-decom} to sample a partition $\cP_j$ from a $(O(\sqrt{\rho \cdot d}), \sqrt{\rho})$-separating decomposition of $X[I_j]$, which takes $\tO(n \cdot 2^{\rho \cdot d \log (\rho \cdot d)})=\tO(n^{1+\rho \cdot \log\log n})$ time. 
If one of the $\rho^{-1}$ applications of \cref{thm:exp-sep-decom} fails, we halt the algorithm and report ``Fail''. Note that by a union bound, this step succeeds with high probability.
Finally, the partition $\cP$ of $X$ is defined as the common refinement of all $\cP_1,\dots, \cP_{\rho^{-1}}$. That is, two points $x,y \in X$ belong to the same cluster of $\cP$ iff $\cP_j(x[I_j])=\cP_j(y[I_j])$ for all $1 \leq j \leq \rho^{-1}$.

We move to analyze the correctness of the described algorithm.
\proofsubparagraph*{Diameter.}
To show that $\cP$ is $1$-bounded, let $x,y \in X$ be points that were clustered together by $\cP$; namely, for every $1 \leq j \leq \rho^{-1}$, $\cP_j(x[I_j])=\cP_j(y[I_j])$.
Then, since every $\cP_j$ is $\rho$-bounded we have 
\[
\|x-y\|_2^2 = \sum_{j=1}^{d}{\|x_j-y_j\|_2^2} = \sum_{j=1}^{t}{\|x[I_j]-y[I_j]\|_2^2} \leq \rho^{-1} \cdot (\sqrt{\rho})^2 = 1.
\]

\proofsubparagraph*{Separation.}
Let $x,y \in X$. 
Then, we have
\begin{align*}
\Pr[\cP(x) \neq \cP(y)] &\leq \sum _{j=1}^{\rho^{-1}}\Pr[\cP_j(x[I_j]) \neq \cP_j(y[I_j])] && \text{by a union bound,} \\ 
&\leq \sum _{j=1}^{\rho^{-1}} O(\sqrt{\rho \cdot d}) \frac{\|x[I_j]-y[I_j]\|_2}{\sqrt{\rho}}   && \text{by def. of $\cP_j$,}\\ 
&= O(\sqrt{d}) \sum _{j=1}^{\rho^{-1}} \|x[I_j]-y[I_j]\|_2 \\ 
&\leq O(\sqrt{d}) \left(\sqrt{\rho^{-1}}\sqrt{\sum _{j=1}^{\rho^{-1}} \|x[I_j]-y[I_j]\|_2^2}\right) &&\text{by Cauchy-Schwarz,} \\ 
&= O(\sqrt{\rho^{-1} \log n}) \cdot \|x - y\|_2 , 
\end{align*}
as required.

\medskip
To conclude, for the case $d = O(\log n)$, we have shown a separating-decomposition algorithm with quality $O(\sqrt{\rho^{-1} \log n})$ that runs in time $\tilde{O}(\rho^{-1} \cdot n^{1+\rho \log \log n})$.
So, \Cref{thm:fast-sep-decom} for this case follows by rescaling $\rho$.

\proofsubparagraph*{General dimension.}
To resolve the case of general $d$, 
apply a JL transform~\cite{JL84} to embed the dataset $X$ into $\ell_2^{O(\log n)}$, 
such that with probability at least $1-\frac{1}{n^6}$, 
the distance between every two dataset points contracts or expands by at most factor $2$.
Then apply the preceding algorithm with diameter bound $\tilde{\Delta}={\Delta}/{4}$, 
obtaining within $\tilde{O}(n^{1+\rho})$ time
a partition sampled from a separating decomposition with separation parameter $\Tilde{\alpha}=O(\sqrt{\rho^{-1}\log n \log \log n})$, 
which induces a partition $\cP$ of the original dataset $X$.

Outputting the partition $\cP$ would cause some issues, 
even though the failure probability of the JL transform is polynomially small. 
First, when it fails, clusters might not have diameter bounded by $\Delta$, 
whereas the output must be a $\Delta$-bounded partition (with probability $1$).
Second, for close-by points, say within distance $\ll \frac{\Delta}{n^6}$, 
the separation probability is dominated by that failure probability,
which is not proportional to the distance.
To solve these issues, execute the following verification steps before outputting $\cP$ (and if one of them fails, we output ``Fail''):
\begin{enumerate}
    \item Diameter verification: 
    For every cluster $C\in\cP$, pick an arbitrary point $x\in C$, and 
    verify that $\|x-y\|_2 \leq {\Delta}/{2}$ for all $y \in C$.
    This can clearly be implemented in deterministic linear time $O(nd)$. 

    \item Close-pairs verification: 
    This step outputs ``Fail'' or ``Success'', with the following guarantees,
    assuming that $d \leq n$ without loss of generality:%
    \footnote{If $d > n$, apply Gram-Schmidt to reduce to a subspace of dimension $\leq n$ in time $O(n^2 d) \leq O(nd^2)$.}
    \begin{itemize}
        \item If some $x,y \in X$ with $\|x-y\|_2 \leq \frac{\Delta}{n^6}$ are separated by $\cP$, the output is ``Fail''.
        \item If all $x,y \in X$ with $\|x-y\|_2 \leq O(d^{3/2}) \cdot \frac{\Delta}{n^6} \leq O(\frac{\Delta}{n^4})$ 
        satisfy $\cP(x) = \cP(y)$, the output is ``Success''.
        (In the contrapositive: if the output is "Fail" then 
        some $x,y \in X$ with $\|x-y\|_2 \leq O(\frac{\Delta}{n^4})$ are separated by $\cP$.)
    \end{itemize}
    This step essentially scans all the pairs whose distance is at most $\frac{\Delta}{n^6}$,  
    but relaxing this threshold is by factor $O(d^{3/2})$.
    It can be implemented in deterministic $\tilde{O}(d^2 n)$ time 
    by applying hashing techniques of \cite{Chan98} for approximate near-neighbor search; 
    see \cite[Section 4]{Stein_Thesis} for a complete description.
\end{enumerate}

We claim that both verification steps succeed with high probability.
Indeed, the diameter verification succeeds whenever the JL transform succeeds, 
in which case every cluster of $\cP$ has diameter at most 
$2\tilde{\Delta} = {\Delta}/{2}$.
As for the close-pairs verification, when the JL transform succeeds, 
each pair $x,y\in X$ with $\|x-y\|_2 \leq O(\frac{\Delta}{n^4})$ is separated with probability at most 
$\tilde\alpha \cdot \frac{2\|x-y\|_2}{\tilde\Delta} \leq O(1/n^3)$, 
so by a union bound, with high probability no such pair is separated and the verification succeeds.

Finally, we show that when the algorithm reports $\cP$, 
it is indeed a partition sampled from an $(O(\sqrt{\rho^{-1} \log n \log \log n}), \Delta$)-separating decomposition. 
Observe that $\cP$ is $\Delta$-bounded 
because by the first verification step every cluster has radius at most ${\Delta}/{2}$.
To verify the separation probability,
denote by $V$ the event that the verification succeeds, 
and by $J$ the event that the JL transform succeeds.
Now consider $x,y \in X$. 
If $\|x-y\|_2 > \frac{\Delta}{n^6}$, then 
\begin{align*}
    \Pr[\cP(x) \neq \cP(y) \mid V] &= \frac{\Pr[\cP(x) \neq \cP(y) \land V]}{\Pr[V]} \\ &\leq \frac{\Pr[\cP(x) \neq \cP(y)]}{\Pr[V]} \\ &\leq O(1)\cdot \Pr[\cP(x) \neq \cP(y)] && \text{as $V$ happens w.h.p.} 
    \\ &\leq O(1)\cdot \big(\Pr[\cP(x) \neq \cP(y) \mid J] + \Pr[\overline{J}]\big) && \text{by law of total probability,} 
    \\ &\leq O(1)\cdot \big(\tilde{\alpha}\cdot \frac{2\|x-y\|_2}{\tilde\Delta} + \frac{1}{n^6}\big) && \text{as $J$ happens w.h.p.} 
    \\ &\leq O(\tilde{\alpha})\frac{\|x-y\|_2}{\Delta} && \text{as $\|x-y\|_2 > \frac{\Delta}{n^6}$;}
\end{align*}
and if $\|x-y\|_2 \leq \frac{\Delta}{n^6}$ 
then clearly $\Pr[\cP(x) \neq \cP(y) \mid V] = 0$.
This concludes the proof.
\end{proof}

\subsection{Application: Deterministic Nearest-Neighbor Search}
\label{sec:det-nns}

The $c$-approximate nearest-neighbor search ($c$-ANN) problem asks to preprocess an $n$-point dataset $P$ in a metric space $(X, d_X)$ into a data structure that can quickly answer the following queries: 
given $q \in X$, report $x \in P$ satisfying $d_X(q, x) \leq c \cdot \min_{y \in P} d_X (q,y)$.
In this section, we apply the simple ``coordinate division'' trick used in our proof of~\Cref{thm:fast-sep-decom}, together with several ingenious ANN algorithms of Indyk, 
to prove the following. 

\begin{theorem}\label{thm:deterministic-nns}
    For every $0<\epsilon<1$, there is a \emph{deterministic} $O(\epsilon^{-2} \log \log n)$-ANN for $n$-point datasets in $\ell_2^d$ or in $\ell_1^d$, 
    with space $\tilde{O}(\epsilon^{-1} n^{1+\epsilon} \cdot \poly(d))$, query time $\poly(d\log n)$, and preprocessing time $\poly(nd)$.
\end{theorem}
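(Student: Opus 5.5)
The plan is to use the same coordinate-division trick as in the proof of \Cref{thm:fast-sep-decom}, with Indyk's deterministic ANN for $\ell_\infty$~\cite{Indyk01_ell_infty} playing the role that \Cref{thm:exp-sep-decom} played there. The target is a deterministic data structure for $\ell_\infty^{k}$ with $k=1/\epsilon$ blocks, each block answered exactly (or near-exactly) by a small-dimensional structure.

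\textbf{Step 1: reduce to a decision version.} First reduce $c$-ANN to the $(c,r)$-near-neighbor decision problem over $O(\log n)$ or $\poly(nd)$ scales. This uses the standard deterministic ring-cover tree of Indyk--Motwani / Har-Peled, which costs only polylogarithmic overhead in space and query. So it suffices to handle a fixed radius $r=1$.

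\textbf{Step 2: split the coordinates.}
\begin{itemize}
\item For $\ell_2$: I cannot use JL, since it is randomized. Instead I split the $d$ coordinates into $k=\epsilon^{-1}$ blocks $I_1,\dots,I_k$.
\item For $\ell_1$: I use the same split.
\end{itemize}
Map each point $x$ to the vector $(\|x[I_1]\|,\dots,\|x[I_k]\|)$, or rather treat the product $\bigoplus_{\ell_\infty}$ of the blocks. By Cauchy--Schwarz, exactly as in the separation calculation, $\max_j\|x[I_j]-y[I_j]\|_2$ and $\|x-y\|_2$ agree up to a factor $\sqrt{k}$. In $\ell_1$ the agreement is up to a factor $k$.

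This loss is too large, so instead I would use the structure differently: embed each block into a low-distortion $\ell_\infty$ space deterministically and combine. Concretely, $\ell_2^{m}$ embeds into $\ell_\infty^{2^{O(m)}}$ with distortion $1+\epsilon$ via a deterministic net of directions. For $\ell_1$, Indyk's product-metric ANN~\cite{Indyk01_ell_infty} / $\ell_1$ results handle this.

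\textbf{Step 3: combine into one $\ell_\infty$ instance and invoke Indyk.} Then apply Indyk's $\ell_\infty$-product ANN over the $k$ blocks. That structure gives $O(\log\log n/\epsilon')$ approximation with space $n^{1+\epsilon'}$, deterministically. The block dimension must be chosen so that the net size $2^{O(m)}$ is $\poly(d)$ or $n^{\epsilon}$. Balancing these constraints gives the $\epsilon^{-2}\log\log n$ approximation: one $\epsilon^{-1}$ comes from Indyk's $\ell_\infty$ tradeoff, and another comes from the block aggregation.

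\textbf{Main obstacle.} I expect the hardest step to be the deterministic handling of high $d$ without JL. That requires choosing the block length so that each block's deterministic embedding or exact structure stays within $n^{\epsilon}\poly(d)$ space, while keeping the accumulated distortion at $O(\epsilon^{-1})$ rather than $\sqrt{d}$. I would first try blocks of size $m=\epsilon\log n$, which gives nets of size $n^{O(\epsilon)}$. I would then control the aggregation loss via Indyk's max-product ANN rather than through a norm comparison.
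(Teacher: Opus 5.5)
There is a genuine gap, and it is the one you flag as the main obstacle: nothing in your proposal reduces the dimension from $d$ to $O(\log n)$ deterministically, and without that neither of your block sizes meets the bounds.

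\textbf{Why neither block size works.} With $k=\epsilon^{-1}$ blocks of $\epsilon d$ coordinates, the per-block embedding you describe maps $\ell_2^{\epsilon d}$ into $\ell_\infty^{2^{\Theta(\epsilon d)}}$ via a direction net. Its dimension, and hence its space, is exponential in $\epsilon d$. That is not $n^{O(\epsilon)}\poly(d)$ when $d\gg\log n$, and you offer no other deterministic per-block structure.

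With your fallback of blocks of $\epsilon\log n$ coordinates, there are $d/(\epsilon\log n)$ blocks. Replacing the norm by the maximum over blocks then distorts distances by up to $d/(\epsilon\log n)$ in $\ell_1$ and $\sqrt{d/(\epsilon\log n)}$ in $\ell_2$. Indyk's max-product ANN cannot ``control'' this loss. Its guarantee is relative to the max-product distance, so this distortion multiplies the final approximation, which is far above $O(\epsilon^{-2}\log\log n)$ for large $d$.

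Conversely, the factor-$k$ loss you dismissed in Step~2 as too large is harmless. With $O(\epsilon^{-1})$ blocks it is exactly one of the two $\epsilon^{-1}$ factors in the target bound, as your own final accounting says.

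\textbf{The missing ingredient.} It is Indyk's deterministic reduction \cite{Indyk00} (see also \cite[Section 3]{AIR19}). A deterministic $O(c)$-ANN for $\ell_1^d$ or $\ell_2^d$ can be assembled from $\poly(d\log n)$ deterministic $c$-ANNs over the Hamming cube $\{0,1\}^{O(\log n)}$. The overhead in space and query time is $\poly(d\log n)$, and the reduction treats $\ell_1$ and $\ell_2$ uniformly.

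Once you are in $\{0,1\}^{O(\log n)}$, your coordinate-division idea goes through directly:
\begin{enumerate}
\item Split the $O(\log n)$ bits into $r=O(\epsilon^{-1})$ blocks of at most $\epsilon\log n$ bits. Then $\max_j d_\cH(x[I_j],y[I_j])$ approximates $d_\cH(x,y)$ within factor $r$.
\item Each block is a finite cube with $2^{\epsilon\log n}=n^\epsilon$ points. An exact (priority) $1$-ANN is therefore just a lookup table of all answers, and no $\ell_\infty$ embedding is needed.
\item Apply the product-metric ANN of \cite{Indyk02}, with the space refinement of \cite{ANNRW17}; this is the relevant result, not the $\ell_\infty$ one of \cite{Indyk01_ell_infty}. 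It gives an $O(\epsilon^{-1}\log\log n)$-ANN for the max-product metric in space $\tilde O(\epsilon^{-1}n^{1+\epsilon})$.
\item Multiplying by $r$ gives $O(\epsilon^{-2}\log\log n)$.
\end{enumerate}
Your Step~1 reduction to a fixed radius is not needed, since both cited constructions handle the full ANN problem.
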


To the best of our knowledge, no known deterministic ANN for $\ell_2^d$ or $\ell_1^d$ when $d \gg \log n$ achieves $o(\log n)$ approximation, $\poly (d\log n)$ query time, and $\tO(\epsilon^{-1}n^{1 + \epsilon}\cdot \poly(d))$ space. 
Furthermore, a recent reduction from \cite{KP26} shows that~\Cref{thm:deterministic-nns} 
can be extended to $\ell_p^d$ for every $1 \leq p < \infty$, at the cost of an additional $\epsilon^{-1}p^{O(1) + \log\log p}$ factor in the approximation.

We now discuss the proof of~\Cref{thm:deterministic-nns}.
By employing a reduction from \cite{Indyk00} (see also \cite[Section 3]{AIR19}), for every $c > 1$, 
one can use $\poly(d\log n)$ deterministic $c$-ANNs for \emph{Hamming space} $(\{0,1\}^{O(\log n)}, d_\cH)$ 
to compute (in polynomial time) a deterministic $O(c)$-ANN in $\ell_2^d$ or $\ell_1^d$, with at most a $\poly(d\log n)$ overhead in query time and space.
Thus~\Cref{thm:deterministic-nns} reduces to proving the following lemma.

 \begin{lemma}\label{lem:det-NNS}
    For every $0<\epsilon<1$, there is a \emph{deterministic} $O(\epsilon^{-2}\log\log n)$-ANN for $n$-point datasets in Hamming space $(\{0,1\}^{O(\log n)},d_\cH)$ with query time $O(\log n)$, space $\tO(\epsilon^{-1}n^{1+\epsilon})$ and preprocessing time $\poly(n)$.
\end{lemma}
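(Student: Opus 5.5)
We plan to combine the coordinate-division trick from the proof of \Cref{thm:fast-sep-decom} with Indyk's deterministic ANN for $\ell_\infty^{d'}$~\cite{Indyk01_ell_infty}. That structure achieves approximation $O(\epsilon^{-1}\log\log d')$ with space $\tO(n^{1+\epsilon}\cdot \poly(d'))$ and query time $\poly(d'\log n)$; it is deterministic and built on the same separator/dense-box dichotomy as \Cref{lem:dense-ball-sep}.

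Let $m=O(\log n)$ be the Hamming dimension. Using the standard reduction, it suffices to solve the near-neighbor decision version at each of the $O(\log n)$ scales $r\in\{1,2,4,\dots,m\}$ and then binary search over scales, which costs only a $\log\log n$ factor in query time. For a fixed scale, split the $m$ coordinates into $k$ blocks $I_1,\dots,I_k$ of equal size $m/k$. Map each $x\in\{0,1\}^m$ to the vector $\phi(x)=(d_\cH(x[I_1],0),\dots)$? This fails, because Hamming distance within a block is not captured by a single number. We therefore use the embedding in the other direction: $\ell_1$ over blocks relates to $\ell_\infty$ over blocks, since for $z\in\{0,1\}^m$ we have $\max_j d_\cH(x[I_j],y[I_j])\le d_\cH(x,y)\le k\max_j d_\cH(x[I_j],y[I_j])$.

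The plan is to take $k=\epsilon^{-1}$ blocks, each of length $\epsilon m=O(\epsilon\log n)$. Within a block, a point has only $2^{\epsilon m}=n^{O(\epsilon)}$ possible values, so we can embed each block isometrically (up to constant factor) into $\ell_\infty^{d_j}$ with $d_j=2^{\epsilon m}$ via the Fréchet embedding $x[I_j]\mapsto (d_\cH(x[I_j],u))_{u\in\{0,1\}^{I_j}}$. The direct sum over blocks then embeds into an $\ell_\infty$-sum, i.e.\ into $\ell_\infty^{d'}$ with $d'=\epsilon^{-1}n^{O(\epsilon)}$, and the distance distortion is at most $k=\epsilon^{-1}$ by the sandwich inequality above. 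Running Indyk's $\ell_\infty^{d'}$ ANN with parameter $\epsilon$ gives approximation $O(\epsilon^{-1}\log\log d')\cdot\epsilon^{-1}=O(\epsilon^{-2}\log\log n)$, since $\log\log d'=O(\log\log n)$.

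The main obstacle is the cost: space and query time. Indyk's space bound has a $\poly(d')$ factor, which here would be $n^{O(\epsilon)}$ and is acceptable after rescaling $\epsilon$. However, query time $\poly(d')$ is not $O(\log n)$. To handle this, we would not materialize the embedding. Instead, we run Indyk's decision tree implicitly, since each node tests a single coordinate $\phi_u(q)=d_\cH(q[I_j],u)$, which can be computed in $O(1)$ word operations via popcount. The tree depth is $O(\log n)$ up to factors, so the query takes $O(\log n)$ coordinate evaluations. Preprocessing evaluates Indyk's construction on the $n\cdot d'$ embedded coordinates, which is $\poly(n)$. Finally, we would check that the space bound $\tO(\epsilon^{-1}n^{1+\epsilon})$ survives, by bounding the tree size by $n^{1+\epsilon}$ (the leaf count guaranteed by the $\alpha^{1-\rho}+\gamma^{1-\rho}\ge1$ recursion) times the $O(\log n)$ scales and $\epsilon^{-1}$ overhead.
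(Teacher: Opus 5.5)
\textbf{The approximation and space accounting are fine; the query-time bound fails.} The block-wise Fr\'echet embedding realizes $d_{\product}$ isometrically in $\ell_\infty^{d'}$, the distortion to $d_\cH$ is $k=\epsilon^{-1}$, $\log\log d'=O(\log\log n)$, and the $\poly(d')=n^{O(\epsilon)}$ space overhead is absorbed by rescaling $\epsilon$.

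Evaluating each node implicitly by popcount makes a node cheap, but it does not shorten the path a query walks. In Indyk's $\ell_\infty^{d'}$ structure, separators are only guaranteed to satisfy $\alpha,\gamma\ge\frac{1}{4d'}$. The $\frac{1}{4d}$ comes from union-bounding the two tails of all $d$ coordinates in the dense-box argument (cf.\ \Cref{lem:dense-ball-sep}). So the only available depth bound is $O(d'\log n)$, exactly as in this paper's depth analysis of the terminal-reduction recursion. This depth is why Indyk's query bound is linear in the dimension, and implicit evaluation does not remove it. With $d'=\epsilon^{-1}n^{\Theta(\epsilon)}$ this factor is polynomial, not polylogarithmic. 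Your sentence ``the tree depth is $O(\log n)$ up to factors'' hides precisely this $n^{\Theta(\epsilon)}$ factor, and nothing in the proposal rules out query paths of that length. As written, your construction gives query time $n^{\Theta(\epsilon)}$.

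\textbf{The missing idea is to not flatten each block into its $2^{|I_j|}$ Fr\'echet coordinates.} The paper instead proceeds as follows.
\begin{itemize}
\item It treats $(\{0,1\}^{O(\log n)},d_{\product})$ as an $\ell_\infty$-product of only $r=O(\epsilon^{-1})$ metrics, namely the block cubes $(\{0,1\}^{I_j},d_\cH)$.
\item It gives each factor a trivial lookup-table exact NN, also in the priority version, with space $O(2^{|I_j|})=O(n^{\epsilon})$ and $O(1)$ query time.
\item It then invokes Indyk's ANN for product metrics \cite[Theorem 1]{Indyk02}, with the space refinement of \cite[Appendix A]{ANNRW17}.
\end{itemize}
That construction pays for the number $r$ of factors, not for the dimension of each factor. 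It yields an $O(\epsilon^{-1}\log\log n)$-ANN for $d_{\product}$ with space $\tO(\epsilon^{-1}n^{1+\epsilon})$ and query time $\log^{O(1)} n$. The distortion $r$ then gives $O(\epsilon^{-2}\log\log n)$ for $d_\cH$. The per-factor (rather than per-coordinate) dependence is exactly what the product-metric theorem supplies and what the black-box $\ell_\infty^{d'}$ guarantee does not.
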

\begin{proof}
    Let $P \subseteq \{0,1\}^{O(\log n)}$ be the given dataset.
    Divide the $O(\log n)$ coordinates of the space into $r = O(\epsilon^{-1})$ sets $I_1, \dots, I_r$, each of size at most $\epsilon \log n$.
    Let $d_{\product}$ denote the \emph{product metric} over $\{0,1\}^{O(\log n)} = \{0,1\}^{I_1} \times \cdots \times \{0,1\}^{I_r}$, defined by 
    \[
    d_{\product} (x,y) := \max_{j=1,\dots,r} d_\cH (x[I_j], y[I_j]).
    \]
    That is, $d_{\product}(x,y)$ is obtained by considering the Hamming distance restricted to each coordinate block $I_j$, and taking the maximum.
    Clearly, $\frac{1}{r} \cdot d_\cH (x,y) \leq d_{\product} (x,y) \leq d_\cH (x,y)$,
    i.e., the identity mapping from $(\{0,1\}^{O(\log n)}, d_\cH)$ to $(\{0,1\}^{O(\log n)}, d_{\product})$ has distortion at most $r = O(\epsilon^{-1})$.

    For each space $(\{0,1\}^{I_j}, d_\cH)$, construct a trivial $1$-ANN data structure of space $O(2^{|I_j|}) = O(n^\epsilon)$ and query time $O(1)$ that explicitly stores the answers to all possible queries. %
    Furthermore, this $1$-ANN also works for the priority version of the problem,
    where each point in the dataset has a priority, and if there is more than one nearest neighbor, the one with smallest priority should be reported.
    
    We now appeal to a known ANN construction for product metrics, from~\cite[Theorem 1]{Indyk02}, 
    and its slight space refinement in~\cite[Appendix A]{ANNRW17} (which require the aforementioned priority version).
    These imply an $O(\epsilon^{-1} \log \log n)$-ANN for the dataset $P$, considered in $(\{0,1\}^{O(\log n)}, d_{\product})$, with space $\tilde{O}(\epsilon^{-1} n^{1+\epsilon})$ and query time $\polylog n$.
    We use this ANN also for the original Hamming metric:
    Given a query $q$, the returned point $x \in P$ satisfies
    \[
    d_\cH (q,x) \leq r \cdot d_{\product}(q,x) \leq r \cdot O(\epsilon^{-1} \log \log n) \cdot \min_{y \in P} d_{\product}(q,y) \leq O(\epsilon^{-2} \log \log n) \cdot \min_{y \in P} d_\cH(q,y) ,
    \]
    as needed.
\end{proof}

{\small
  \bibliographystyle{alphaurl}
  \bibliography{references}
} %

\appendix

\section{Spanner Construction in $\ell_\infty$}\label{sec:spanners-appendix}

\begin{proof}[Proof of~\Cref{thm:spanner}]
Without loss of generality, assume $\min_{x,y \in X}\|x-y\|_\infty=1$ and $\max_{x,y \in X}\|x-y\|_\infty=D_X$, and let $\beta = O(\rho^{-1}\log\log d)$. 
For every $i \in \{1,\dots, \log D_X + 1 \}$, apply~\Cref{thm:sparse-cover-ell-infty} with $\rho$ and $\Delta_i=2^i \beta$ to obtain a $(\beta, s, \Delta_i)$-neighborhood cover $\cC_i$, where $s=O(n^\rho \rho^{-1} \log n)$.
We construct the spanner by selecting an arbitrary point $s_C$ from every cluster $C \in \cC_i$ and adding the edges $\big\{\{x, s_C\} \mid x \in C \big\}$. 

The spanner's sparsity follows immediately from that of the covers $\cC_1,\dots,\cC_{O(\log D_X)}$: for each $i$, we add a linear number of edges per cluster $C\in\cC_i$, and each point belongs to at most $s=O(n^\rho \rho^{-1} \log n)$ clusters.
To verify the stretch, let $x,y \in X$ and let $i$ such that $\|x-y\|_\infty \in [2^{i-1}, 2^i)$. Let $C\in \cC_i$ be the cluster covering $x$, meaning $B(x,2^i) \subseteq C$.
Since $\|x-y\|_\infty < 2^i$, $y \in B(x,2^i) \subseteq C$, so the spanner contains the path $x \to s_C \to y$. The diameter of $C$ is at most $2^i\beta$, hence, the weight of this path is at most $2^{i+1}\beta \leq 4\beta \|x-y\|_\infty = O(\rho^{-1}\log\log d)\|x-y\|_\infty$, concluding the proof.
\end{proof}

\end{document}